\newtheorem{theorem}{Theorem}[section]
\newtheorem{lemma}[theorem]{Lemma}
\title{Loss--Complexity Landscape \\ and \\ Model Structure Functions}
\author{Alexander Kolpakov\thanks{Corresponding author; University of Austin, 522 Congress Ave, Austin, TX 78701, USA; email: kolpakov.alexander@gmail.com}}
\date{\today}
\begin{document}
\maketitle

\begin{abstract}
We develop a framework for dualizing the Kolmogorov structure function $h_x(\alpha)$, which then allows using computable complexity proxies. We establish a mathematical analogy between information-theoretic constructs and statistical mechanics, introducing a suitable partition function and free energy functional. We explicitly prove the Legendre–Fenchel duality between the structure function and free energy, showing detailed balance of the Metropolis kernel, and interpret acceptance probabilities as information-theoretic scattering amplitudes. A susceptibility-like variance of model complexity is shown to peak precisely at loss-complexity trade-offs interpreted as phase transitions. Practical experiments with linear and tree-based regression models verify these theoretical predictions, explicitly demonstrating the interplay between the model complexity, generalization, and overfitting threshold.
\end{abstract}

\section{Introduction}
Kolmogorov complexity $K(x)$ of a string $x$ is the length of the shortest program that outputs $x$ on a fixed universal prefix-free Turing machine, and plays a fundamental role in theoretical computer science and information theory \cite{li1997introduction}. Its refinement, the Kolmogorov structure function, measures the complexity required to describe a string within a given range of model set complexity:
\[
h_x(\alpha)=\min_{\substack{S\ni x \\ K(S)\le\alpha}}\log|S|,
\]
where $K(S)$ is the Kolmogorov complexity of the set of binary strings that provide models for the given binary string $x$. 

Here the minimization is performed, for each fixed complexity budget $\alpha$, over all model classes $S$ containing $x$ whose complexity does not exceed $\alpha$; in this sense, $\alpha$ parametrizes the admissible region in complexity space. The quantity $\log|S|$ is the remaining description length once the model class $S$ has been specified. A detailed account is given in~\cite{vereshchagin00} which is accessible exclusively to readers with specialized domain knowledge. Thus we give a self-contained account that only assumes a solid general mathematical culture by simply avoid Kolmogorov complexity while scaffolding a similar idea.  

Notably, practical applications seem impossible due to the uncom\-pu\-ta\-bi\-li\-ty of Kolmogorov complexity. Numerical approximations to $K(S)$ have been developed and studied \cite{Zenil2011}, though their efficiency is likely much lower than what is necessary for practical computations. Thus, it makes sense to replace $K(S)$ by a computable complexity proxy $\mathrm{Comp}(S)$, which gives rise to a variety of ``structure functions'' $h_x(\alpha)$ that can be investigated numerically, as described in Section~\ref{section:model}. 

This is not the same as working with Kolmogorov complexity itself, nor is it the classical minimum description length (MDL) framework: the latter is a coding-based model-selection principle tied to a chosen statistical encoding scheme, whereas here we retain only the structure-function viewpoint and replace $K(S)$ by an explicit computable proxy.

First, we study the corresponding free-energy functional dual to the structure function: the Legendre--Fenchel duality provides the equivalence between the constrained problem at fixed complexity budget $\alpha$ and the penalized problem with Lagrange multiplier $\lambda$, as made precise in Sections~\ref{section:action}--\ref{section:lf}. Then, we develop a rigorous method, based on simulated annealing, that elucidates explicit connections to statistical mechanics and scattering theory, cf. Sections~\ref{section:sm}--\ref{section:tradeoff}.

In practice, however, Bayesian optimization gives a more powerful approach to model optimization \cite{robson2013diffusion, seckler2022bayesian, reich2015probabilistic}. Coupled with our method, it provides a way of model optimization that finds a model with good generalization properties and avoids overfitting, cf. Section~\ref{section:experiments}.   

\section{Model Structure Function}\label{section:model}

Let $S$ be our model viewed as a set of trainable parameters and hyperparameters. We shall write $S \ni x$ (read ``$S$ models $x$'') if our model has $x$ in its training set. The test (and validation, if any) set of $S$ is not considered until later. 

A computable complexity function $\mathrm{Comp}(S)$ together with the \textit{training} loss function $\mathrm{Loss}(S)$ leads to a new \emph{model structure function} of the form
\[
h_x(\alpha)=\min_{\substack{S\ni x\\\mathrm{Comp}(S)\le\alpha}}\mathrm{Loss}(S).
\]
This function reflects the tradeoff between the model complexity and bias. One may think of it as a quantification of the bias--variance tradeoff in terms of model's complexity rather than its variance. Indeed, high-variance models are necessarily more complex, but complex models are not necessarily high-variance \cite{neal2019modern}.  

We verify numerically and argue theoretically that we obtain the loss-complexity landscape for our model depending on its hyperparameters, and that the model structure function captures the practically significant aspects of it, such as the complexity salience point after which overfitting occurs. 

\section{Information-theoretic Action and Free Energy}\label{section:action}
Let us define an information-theoretic action functional (hereafter called ``action'' for brevity) as
\[
A_\lambda(S)=\lambda\,\mathrm{Comp}(S)+\mathrm{Loss}(S).
\]

This action serves as a computational analogue of physical action in statistical physics. Minimizing this yields a free-energy analogue
\[
F(\lambda)=\min_{S\ni x} A_\lambda(S),
\]
where we want to minimize both the model's complexity and its training loss, provided a given balance between the two parts described by $\lambda \geq 0$. That is, if $\lambda = 0$ then we minimize the training loss at the possible cost of high complexity, and if $\lambda \gg 1$ then we try to learn the train dataset with a parsimonious model. 

\section{Legendre–Fenchel Duality Between $h_x(\alpha)$ and $F(\lambda)$}\label{section:lf}

The reason behind the free energy functional becomes apparent in the context of duality. Let $\aleph$ be the set of all possible model complexity values. Extend $h_x(\alpha)$ to a convex function \(\phi\colon\mathbb R\to\mathbb R\cup\{+\infty\}\) by setting
\[
\phi(\alpha)=
\begin{cases}
h_x(\alpha),&\alpha\ge 0,\;\alpha\in\aleph,\\
+\infty,&\text{otherwise.}
\end{cases}
\]

\begin{theorem}[Legendre–Fenchel Duality]
The functions \(\phi(\alpha)\) and \(F(\lambda)\) are Legendre–Fenchel duals:
\[
F(\lambda)
\;=\;
\min_{\alpha\ge0}\bigl[\lambda\,\alpha\;+\;\phi(\alpha)\bigr],
\qquad
\phi(\alpha)
\;=\;
\max_{\lambda\ge0}\bigl[F(\lambda)\;-\;\lambda\,\alpha\bigr].
\]
\end{theorem}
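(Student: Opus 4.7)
The plan is to establish the two duality relations in turn: first the primal identity $F(\lambda)=\min_{\alpha\ge 0}[\lambda\alpha+\phi(\alpha)]$ by stratifying the minimization over $S$ according to the value of $\mathrm{Comp}(S)$, and then the dual identity by invoking the Fenchel--Moreau biconjugate theorem together with the monotonicity of $h_x$.

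For the primal identity I would prove both inequalities by direct bookkeeping. For ``$\le$'', let $S^\star$ attain the minimum in $F(\lambda)$ and set $\alpha^\star=\mathrm{Comp}(S^\star)$; since $S^\star$ is feasible under the constraint $\mathrm{Comp}(S)\le\alpha^\star$, we have $\phi(\alpha^\star)\le\mathrm{Loss}(S^\star)$, hence
\[
\min_{\alpha\ge 0}[\lambda\alpha+\phi(\alpha)]\;\le\;\lambda\alpha^\star+\phi(\alpha^\star)\;\le\;\lambda\,\mathrm{Comp}(S^\star)+\mathrm{Loss}(S^\star)\;=\;F(\lambda).
\]
For ``$\ge$'', take any admissible $\alpha\in\aleph$ with $\alpha\ge 0$ and any $S\ni x$ with $\mathrm{Comp}(S)\le\alpha$; since $\lambda\ge 0$, $\lambda\alpha+\mathrm{Loss}(S)\ge\lambda\,\mathrm{Comp}(S)+\mathrm{Loss}(S)\ge F(\lambda)$, and minimizing first over $S$ and then over $\alpha$ yields $\min_{\alpha}[\lambda\alpha+\phi(\alpha)]\ge F(\lambda)$.

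For the dual identity, the direction $\phi(\alpha)\ge\max_{\lambda\ge 0}[F(\lambda)-\lambda\alpha]$ is an immediate consequence of the primal identity just established: $F(\lambda)-\lambda\alpha\le\phi(\alpha)$ for every $\lambda\ge 0$. For the reverse inequality I would invoke Fenchel--Moreau: because $\phi$ is proper, convex, and lower semicontinuous on $\mathbb R$, its biconjugate coincides with $\phi$, so $\phi(\alpha)=\sup_{\mu\in\mathbb R}[\mu\alpha-\phi^\ast(\mu)]$. Writing $\mu=-\lambda$ and using that $\phi^\ast(-\lambda)=-F(\lambda)$ (which is the primal identity restated), this becomes $\phi(\alpha)=\sup_{\lambda\in\mathbb R}[F(\lambda)-\lambda\alpha]$. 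The restriction $\lambda\ge 0$ is legitimate because $h_x$ is non-increasing in $\alpha$ (a larger complexity budget can only lower the attainable minimum loss), so $\phi^\ast(\mu)=+\infty$ for $\mu>0$ and the effective domain of the conjugate lies in $\mu\le 0$; equivalently, every supporting affine minorant of $\phi$ has non-positive slope.

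The main obstacle I expect is not the computation but justifying the hypotheses of Fenchel--Moreau in this discrete, proxy-based setting. A priori the raw graph $\{(\alpha,h_x(\alpha)):\alpha\in\aleph\}$ over a possibly non-convex set $\aleph$ of achievable complexity values is only monotone and piecewise, not convex; the statement of the theorem declares $\phi$ convex essentially by fiat. I would therefore clarify at the outset that $\phi$ is to be interpreted as the closed convex envelope (the lower convex hull of the epigraph) of the raw structure function, after which Fenchel--Moreau applies verbatim, $F$ is the Legendre transform of this envelope, and the biconjugate recovers $\phi$ rather than the raw $h_x$---the two coinciding precisely at the exposed points of the envelope. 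This caveat is the only delicate point; once it is acknowledged, the proof is an essentially mechanical application of convex duality combined with the monotonicity of the structure function.
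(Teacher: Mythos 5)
Your proof of the first identity is correct and essentially the paper's argument, just written out as two inequalities instead of the paper's single chain of equalities; both hinge on the same observation that minimizing over $\alpha$ and then over feasible $S$ is the same as minimizing $\lambda\,\mathrm{Comp}(S)+\mathrm{Loss}(S)$ directly over $S$, using $\lambda\ge 0$ to absorb the inequality constraint $\mathrm{Comp}(S)\le\alpha$. For the second identity you genuinely diverge from the paper: the paper writes $h_x(\alpha)=\min_S\max_{\lambda\ge0}L(S,\lambda)$ for the Lagrangian $L(S,\lambda)=\mathrm{Loss}(S)+\lambda(\mathrm{Comp}(S)-\alpha)$ and then swaps $\min$ and $\max$, justifying the swap by linearity in $\lambda$ and countability of the model set, whereas you take the weak-duality direction from the first identity and close the gap with Fenchel--Moreau. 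Your route is the more defensible one: a minimax swap over a discrete (countable) set of models is exactly the situation where strong duality can fail, and countability is not a sufficient hypothesis for it, so the paper's proof of the second identity has a real gap at that step. Your caveat about $\phi$ is the honest resolution of the same gap seen from the other side: the paper's displayed definition of $\phi$ (equal to $h_x$ on $\aleph$, $+\infty$ elsewhere) is declared convex but need not be, and the biconjugate $\max_{\lambda\ge0}[F(\lambda)-\lambda\alpha]$ recovers only the closed convex envelope of the raw structure function, coinciding with $h_x$ at the exposed points of that envelope. So your proposal is not merely an alternative proof; it correctly identifies the hypothesis (convexity and lower semicontinuity of $\phi$, or equivalently replacing $\phi$ by its convex hull) under which the second identity is actually true, which the paper's argument obscures. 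One small point to tighten: your claim that $\phi^{\ast}(\mu)=+\infty$ for $\mu>0$ uses unboundedness of $\aleph$; if the set of achievable complexities is bounded you should instead argue directly that the supremum over $\lambda\in\mathbb{R}$ is attained at some $\lambda\ge0$ because the envelope is non-increasing, hence its subgradients are non-positive.
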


\begin{proof}
Observe that
\[
\min_{\alpha\ge0}\bigl[\lambda\alpha + \phi(\alpha)\bigr]
=\min_{\alpha\,\in\,\aleph}\bigl[\lambda\,\alpha + h_x(\alpha)\bigr]
=\min_{S\ni x}\bigl[\lambda\,\mathrm{Comp}(S)+\mathrm{Loss}(S)\bigr]
=F(\lambda),
\]
since for each \(S\) with \(\mathrm{Comp}(S)=\alpha\), \(\mathrm{Loss}(S)\ge h_x(\alpha)\) and equality is attained by the definition of \(h_x(\alpha)\).  This establishes the first identity.

By definition of the model structure function,
\[
h_x(\alpha)
=\min_{\substack{S\ni x \\ \mathrm{Comp}(S)\le\alpha}}
\mathrm{Loss}(S).
\]
Introduce a nonnegative Lagrange multiplier \(\lambda\ge0\) to enforce the constraint \(\mathrm{Comp}(S)\le\alpha\), and  let
\[
L(S,\lambda)
=\mathrm{Loss}(S)\;+\;\lambda\bigl(\mathrm{Comp}(S)-\alpha\bigr)
\]
be the associated Lagrangian. 

For any model \(S\) satisfying \(\mathrm{Comp}(S)\le\alpha\), we have
\(\mathrm{Loss}(S) \geq L(S,\lambda)\).  Therefore
\[
h_{x}(\alpha)
=\min_{S\ni x}\,\max_{\lambda\ge0}L(S,\lambda).
\]
Since \(L(S,\lambda)\) is linear (and thus convex) in \(\lambda\) and the set of models is countable\footnote{Either finite or can be enumerated with natural numbers, e.g. we can enumerate all Turing machines by the so-called G\"odel numbering or, for all practical purposes, we have a finite number of real numbers available for a computer, depending on the bit size, and a potentially infinite but discreet set of neural network architectures, as they are easily described as graphs.}, we can swap minimization and maximization:
\[
h_x(\alpha)
=\max_{\lambda\ge0}\;\min_{S\ni x}L(S,\lambda)
=\max_{\lambda\ge0}\;\min_{S\ni x}\bigl[\mathrm{Loss}(S)+\lambda\,\mathrm{Comp}(S)-\lambda\,\alpha\bigr].
\]
By definition of the free energy, we have
\[
\min_{S\ni x}\bigl[\mathrm{Loss}(S)+\lambda\,\mathrm{Comp}(S)\bigr]
=F(\lambda).
\]
Hence
\[
h_x(\alpha)
=\max_{\lambda\ge0}\bigl[F(\lambda)-\lambda\,\alpha\bigr],
\]
as required.
\end{proof}

Let us note that the free energy
\[
F(\lambda)\;=\;\min_{\alpha\ge0}\bigl[\lambda\,\alpha \;+\;h_x(\alpha)\bigr]
\]
can be viewed as the \emph{lower envelope} of the family of lines
\[\ell_\alpha(\lambda)=\lambda\,\alpha + h_x(\alpha),\] 
one for each complexity level \(\alpha\).  

Geometrically, \(F\) is the pointwise infimum of these lines, and is therefore a convex, piecewise‐linear function of \(\lambda\).  On each interval where a single line \(\ell_{\alpha_*}\) attains the minimum, we have \(F(\lambda)=\ell_{\alpha_*}(\lambda)\) and its slope is constant, equal to \(\alpha_*\), the model complexity.  When two lines \(\ell_{\alpha_1}\) and \(\ell_{\alpha_2}\) intersect, the minimizer switches from \(\alpha_1\) to \(\alpha_2\), producing a ``kink'' or ``elbow'' shape in the graph of \(F\).

That’s why, even if $h_x(\alpha)$ is complicated, the dual free energy form $F(\lambda)$ automatically acquires a piecewise‐linear convex envelope structure: this shape is much easier to analyze, especially if we are interested only in the extrema of $F(\lambda)$. 

\section{Statistical Mechanics Analogy}\label{section:sm}
Let us introduce the following partition function
\[
Z(\lambda,T)=\sum_{S\ni x} e^{-A_\lambda(S)/T},
\]
defining a Gibbs probability distribution over model classes as
\[
\pi_\lambda(S)=\frac{e^{-A_\lambda(S)/T}}{Z(\lambda,T)}.
\]

Then the free energy can defined analogously to statistical mechanics as:
\[
F(\lambda,T)=-T\log Z(\lambda,T).
\]

In the low-temperature limit $T\to 0$, we recover precisely $F(\lambda)$. Otherwise, we can think about ``least action'' models being more likely with respect to the Gibbs measure. In this setting, running a probabilistic search algorithm with respect to $\pi_\lambda(S)$ should reveal a model with relatively low $A_\lambda(S)$, which is advantegeous for us. 

\section{Metropolis--Hastings Algorithm}\label{section:metropolis}

The Metropolis algorithm, introduced by Metropolis \emph{et al.}~\cite{metropolis53}, is an important method in computational statistical physics and optimization. It turns out to be particularly useful for sampling from complicated distributions and minimizing complex cost functions.

\subsection{Formal Definition}

Consider a finite or countable state space \(\mathcal{S}\) and a real-valued function \(A:\mathcal{S}\to\mathbb{R}\), referred to here as the action. The Metropolis algorithm constructs a Markov chain with stationary probability distribution:
\[
\pi(S)=\frac{e^{-A(S)/T}}{Z(T)},\quad\text{with}\quad Z(T)=\sum_{S'\in\mathcal{S}} e^{-A(S')/T},
\]
which is the Gibbs distribution discussed above, safe for a more general form for the action functional. 

\subsection{Algorithmic Procedure}

The Metropolis procedure for a single iteration is as follows.

\begin{algorithm}[H]
\caption{Metropolis Step}
\begin{algorithmic}[1]
\State Initialize current state \(S\in\mathcal{S}\).
\State Generate candidate state \(S'\) from a symmetric proposal distribution \(Q(S\to S')=Q(S'\to S)\).
\State Compute the action difference: \(\Delta A = A(S') - A(S)\).
\State Accept the new state \(S'\) with probability:
\[
P_{\text{accept}}(S\to S')=\min\{1, e^{-\Delta A/T}\}.
\]
\State Otherwise, retain the current state \(S\).
\end{algorithmic}
\end{algorithm}

Repeatedly applying this step produces a Markov chain whose stationary distribution is guaranteed to be the Gibbs distribution \(\pi(S)\) under some mild conditions on the Gibbs measure $\pi(S)$, and the proposal distribution $Q(S \to S')$. 

\subsection{Conditions for Correctness}

The following \emph{detailed balance} condition is sufficient for the existence of stationary distribution:
\[
\pi(S)P(S\to S') = \pi(S')P(S'\to S),
\]
where \(P(S\to S')\) denotes the transition probability from state \(S\) to \(S'\). We can readily show that the Gibbs measure defined above satisfies it. 

\begin{lemma}[Detailed Balance for Metropolis Algorithm]
The Metropolis transition probability \(P(S\to S')\) satisfies the detailed balance condition with respect to \(\pi(S)\).
\end{lemma}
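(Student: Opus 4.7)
The plan is to verify the detailed balance condition by a direct computation on the explicit form of the Metropolis kernel, reducing the claim to an elementary case analysis on the sign of $\Delta A = A(S')-A(S)$. For distinct states $S\neq S'$, I would write $P(S\to S') = Q(S\to S')\,P_{\text{accept}}(S\to S')$, plug this and the analogous expression for $P(S'\to S)$ into $\pi(S)P(S\to S')=\pi(S')P(S'\to S)$, and observe that the partition function $Z(T)$ cancels on both sides. Invoking the symmetry hypothesis $Q(S\to S')=Q(S'\to S)$, the proposal factors also cancel, so the claim reduces to verifying
\[
e^{-A(S)/T}\,P_{\text{accept}}(S\to S') \;=\; e^{-A(S')/T}\,P_{\text{accept}}(S'\to S).
\]

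Next I would split into two cases based on the sign of $\Delta A$. Without loss of generality, assume $\Delta A \geq 0$; then $P_{\text{accept}}(S\to S') = e^{-\Delta A/T}$ while $P_{\text{accept}}(S'\to S) = 1$, so the left side becomes $e^{-A(S)/T}e^{-\Delta A/T} = e^{-A(S')/T}$, matching the right side. The case $\Delta A < 0$ is handled by interchanging the roles of $S$ and $S'$. The self-loop case $S=S'$ is trivial since both sides of the detailed balance equation coincide. Combining all cases proves the lemma, and from detailed balance the usual argument yields that $\pi$ is a stationary distribution for the Metropolis chain.

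The argument is essentially a routine verification rather than a deep result; the only conceptual point worth emphasizing is that the Metropolis acceptance rule $\min\{1,e^{-\Delta A/T}\}$ is engineered precisely so that the ratio $P_{\text{accept}}(S\to S')/P_{\text{accept}}(S'\to S)$ equals the Gibbs weight ratio $e^{-(A(S')-A(S))/T} = \pi(S')/\pi(S)$, which is exactly what detailed balance demands once the symmetric proposal factors cancel. There is no genuine obstacle; the main thing to be careful about is to explicitly invoke the symmetry of $Q$, since the lemma would fail without it (in the asymmetric setting one would need the Hastings correction factor, which lies outside the statement at hand).
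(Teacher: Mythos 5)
Your proof is correct and follows essentially the same route as the paper's: a direct case analysis on the sign of $\Delta A$ showing that $\pi(S)P(S\to S')=\pi(S')P(S'\to S)$. If anything, your version is slightly more careful than the paper's, since you explicitly factor the transition kernel as $P(S\to S')=Q(S\to S')\,P_{\text{accept}}(S\to S')$ and invoke the symmetry of $Q$ to cancel the proposal terms, whereas the paper's proof silently identifies the transition probability with the acceptance probability and never mentions $Q$.
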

\begin{proof}
Consider two states \(S, S'\in\mathcal{S}\). If \(\Delta A = A(S') - A(S)\leq 0\), then:
\[
P(S\to S')=1,\quad P(S'\to S)=e^{-\frac{A(S)-A(S')}{T}}.
\]
Thus:
\[
\pi(S)P(S\to S')=\frac{e^{-A(S)/T}}{Z},
\]
\[
\pi(S')P(S'\to S)=\frac{e^{-A(S')/T} e^{-(A(S)-A(S'))/T}}{Z}=\frac{e^{-A(S)/T}}{Z}.
\]

If \(\Delta A>0\), the roles reverse and a similar argument holds. Hence, detailed balance is satisfied.
\end{proof}

Another, necessary condition, is that the Markov chain thus obtained is $\pi(S)$--irreducible and aperiodic. This can be easily guaranteed by the appropriate choice of the proposal distribution $Q(S \to S')$ as described in the standard references \cite{metropolis53, kirkpatrick83}. 

\subsection{Temperature Parameter and Annealing Schedule}

The temperature \(T\) regulates the balance between exploration (accepting higher-action states to escape local minima) and exploitation (preferring lower-action states). Typically, simulated annealing involves systematically lowering \(T\) from an initial high temperature \(T_0\) to near-zero values:
\[
T_{k+1}=\gamma T_k,\quad 0<\gamma<1.
\]

This annealing schedule enables the algorithm to probabilistically converge toward global minima of the action as \(T\to 0\).

\section{Simulated Annealing Procedure}\label{section:anneal}
To practically minimize $A_\lambda(S)$, one can use simulated annealing with Metropolis updates, as described in \cite{kirkpatrick83}. Repeating the annealing procedure across a range of $\lambda$ values yields approximate structure function pairs $(\alpha,h)$.

\begin{algorithm}[H]
\caption{Simulated Annealing}
\begin{algorithmic}[1]
\State Initialize model $S\in\mathcal{S}$ arbitrarily, set $T=T_0$
\While{$T>T_{\min}$}
\State Propose new model $S'$ from the neighborhood of current $S$
\State Compute action difference $\Delta A=A_\lambda(S')-A_\lambda(S)$
\State Accept $S'$ with probability $\min\{1,e^{-\Delta A/T}\}$
\State Decrease temperature $T\leftarrow\gamma T$ for $0<\gamma<1$
\EndWhile
\State Return approximate minimizer $S^*\approx S$
\end{algorithmic}
\end{algorithm}

Practically though, we would use Bayesian optimizers such as HyperOpt \cite{bergstra2013making} or Optuna \cite{akiba2019optuna}. 

\section{Information–Scattering Analogy}\label{section:scatter}
We already know that the detailed balance condition is satisfied with respect to the Gibbs measure $\pi_\lambda(S)$. This statistical mechanics analogy can be taken further with the following observation. However, we shall use it mostly as a useful analogy rather than an actual technique. 

\begin{theorem}[Acceptance as Scattering Amplitude]
The Metropolis acceptance criterion directly corresponds to a semiclassical path-integral scattering amplitude, identifying $T$ with Planck's constant $\hbar$:
\[
P(S\to S')\sim e^{-\Delta A/T}\leftrightarrow e^{i A[q(t)]/\hbar}.
\]
\end{theorem}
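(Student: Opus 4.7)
The plan is to establish the claimed correspondence via a Wick rotation argument, since the statement is really a formal analogy between Euclidean (imaginary-time) and Lorentzian (real-time) path integrals rather than an equality of two analytic objects. First I would write the Feynman propagator between configurations $q_S$ and $q_{S'}$ as
\[
K(S',S)=\int_{q(0)=q_S}^{q(t_f)=q_{S'}}\mathcal{D}q\;\exp\!\bigl(iA[q(t)]/\hbar\bigr),
\]
and then apply the analytic continuation $t\mapsto -i\tau$, which sends the Lorentzian action $A[q(t)]$ to the Euclidean action $A_E[q(\tau)]$ and converts the oscillatory amplitude into the Boltzmann weight
\[
K_E(S',S)=\int_{q(0)=q_S}^{q(\tau_f)=q_{S'}}\mathcal{D}q\;\exp\!\bigl(-A_E[q(\tau)]/\hbar\bigr).
\]

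Next I would pass to the semiclassical (stationary–phase, or, after Wick rotation, steepest–descent) limit: the dominant contribution localizes on the classical trajectory $q_{\mathrm{cl}}(\tau)$ connecting the two endpoints, so that $K_E(S',S)\sim \exp\!\bigl(-A_E[q_{\mathrm{cl}}]/\hbar\bigr)$ up to a prefactor coming from Gaussian fluctuations. In our discrete combinatorial setting the ``trajectory'' between models $S$ and $S'$ collapses to a single hop, and the natural discrete analogue of $A_E[q_{\mathrm{cl}}]$ is the action increment $\Delta A=A_\lambda(S')-A_\lambda(S)$; identifying the role of $\hbar$ (the scale governing quantum fluctuations around the classical path) with $T$ (the scale governing thermal fluctuations around the minimum-action state) then yields
\[
K_E(S',S)\;\sim\;e^{-\Delta A/T},
\]
which matches the upward-move branch of the Metropolis acceptance rule $\min\{1,e^{-\Delta A/T}\}$. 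The downward-move branch, where acceptance saturates at $1$, corresponds to the amplitude being dominated by a classical trajectory that costs no Euclidean action (the system slides downhill), so the normalization is absorbed into the prefactor and does not contradict the scaling.

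The main obstacle is that the statement is not a theorem in the strict mathematical sense: it is a formal identification via Wick rotation between a real stochastic kernel and a complex quantum amplitude, and the two objects live in genuinely different analytic categories. I would therefore frame the ``proof'' as a dictionary, emphasizing that (i) imaginary-time continuation is the only move converting $e^{iA/\hbar}$ into a real positive weight, (ii) the steepest-descent contribution in that weight is exactly $e^{-\Delta A/\hbar}$, and (iii) under the substitution $\hbar\mapsto T$ this is the Metropolis acceptance probability. I would also explicitly note, as the paper itself already hints, that this is intended as a heuristic correspondence rather than a computational tool, so the rigor required is that of an analogy made precise at the level of semiclassical asymptotics, not a pointwise identity of kernels.
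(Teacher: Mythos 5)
Your proposal is correct at the only level of rigor this statement admits --- it is a formal dictionary, not a theorem in the strict sense, and you are right to say so explicitly (the paper itself concedes the point by calling it ``a useful analogy''). Your route differs from the paper's in one substantive way: the paper does not perform the Wick rotation itself. Instead it quotes the Lorentzian path integral, invokes the stationary-phase approximation to get $e^{iA[q_{\mathrm{cl}}]/\hbar}$, and then simply \emph{cites} the quantum tunneling amplitude $e^{-(A_{\text{barrier}}-A_{\text{initial}})/\hbar}$ from Landau--Lifshitz as the form relevant to classically forbidden transitions, matching that expression to $e^{-\Delta A/T}$. You instead derive where that real, positive exponential comes from: analytic continuation $t\mapsto -i\tau$ turns the oscillatory amplitude into a Euclidean weight, steepest descent localizes it on the classical (instanton) trajectory, and the single-hop discrete analogue of the Euclidean action is $\Delta A$. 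Since the tunneling formula the paper cites is precisely the Euclidean/instanton contribution, your argument supplies the mechanism behind the paper's citation and is arguably the more self-contained version. Your additional observation --- that the downhill branch where acceptance saturates at $1$ corresponds to a trajectory with no Euclidean action cost --- addresses a case the paper's proof passes over silently, and is a genuine improvement. Neither version is a proof in the mathematical sense, but yours makes the analogy precise at the level of semiclassical asymptotics, which is the most that can be asked.
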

\begin{proof}
In simulated annealing, the Metropolis--Hastings acceptance probability for transitioning from a current state $S$ to a candidate state $S'$ is given explicitly by~\cite{kirkpatrick83}:
\[
P(S \to S') = \min\left\{1, e^{-(A(S') - A(S))/T}\right\},
\]
where $A(S)$ is the defined information-theoretic action and $T$ is the annealing temperature.

To reveal the analogy with quantum-mechanical scattering amplitudes, consider Feynman's path-integral formulation of quantum mechanics. The quantum mechanical amplitude for a system transitioning from state $q$ at time $0$ to state $q'$ at time $t$ is given by the path integral~\cite{feynman2010quantum}:
\[
\langle q'| e^{-iHt/\hbar} |q\rangle = \int \mathcal{D}[q(t)]\, e^{iA[q(t)]/\hbar},
\]
where $A[q(t)]$ is the classical action functional\footnote{In physics, it would be denoted $S[q(t)]$, but letter ``$S$'' is already used in another context.}, $H$ is the Hamiltonian of the system, and $\hbar$ is Planck's constant.

In the semiclassical or stationary-phase approximation, the path integral is dominated by paths close to classical solutions~\cite{schulman2005techniques}. Expanding around these solutions, one obtains an amplitude dominated by terms of the form:
\[
e^{i A[q_{\text{cl}}(t)]/\hbar}.
\]

Specifically, when considering quantum tunneling through potential barriers (classically forbidden regions), the transition amplitude takes a form proportional to~\cite{landau2013quantum}:
\[
e^{-(A_{\text{barrier}} - A_{\text{initial}})/\hbar}.
\]

Thus, if we identify the temperature parameter $T$ of simulated annealing with Planck's constant $\hbar$,
\[
T \longleftrightarrow \hbar,
\]
and the information-theoretic action difference $\Delta A = A(S') - A(S)$ with the corresponding classical action difference $A_{\text{barrier}} - A_{\text{initial}}$, the following formal analogy appears: 
\[
e^{-(A(S') - A(S))/T} \quad\leftrightarrow\quad e^{-(A_{\text{barrier}} - A_{\text{initial}})/\hbar}.
\]

Hence, the Metropolis acceptance criterion explicitly mirrors semiclassical quantum tunneling amplitudes, in analogy between simulated annealing acceptance probabilities and quantum-mechanical scattering amplitudes derived from the stationary-phase approximation of path integrals.
\end{proof}

\section{Susceptibility and Resonance as Trade-off between Loss and Complexity}\label{section:tradeoff}

In statistical‐mechanical language the \emph{susceptibility} measures the sensitivity of the free energy to changes in the Lagrange multiplier \(\lambda\).  Concretely, let the partition function be
\[
Z(\lambda)=\sum_{S\ni x}e^{-A_\lambda(S)},
\qquad
A_\lambda(S)=\lambda\,\mathrm{Comp}(S)+\mathrm{Loss}(S),
\]
so that the free energy is
\[
F(\lambda)=-\ln Z(\lambda).
\]

By standard thermodynamic identities,
\[
\frac{dF}{d\lambda}
=\bigl\langle\mathrm{Comp}(S)\bigr\rangle_\lambda,
\quad
\frac{d^2F}{d\lambda^2}
=\frac{d}{d\lambda}\bigl\langle\mathrm{Comp}(S)\bigr\rangle_\lambda
=\mathrm{Var}_{\pi_\lambda}\!\bigl[\mathrm{Comp}(S)\bigr],
\]
where \(\langle\cdot\rangle_\lambda\) denotes expectation under the Gibbs measure
\[\pi_\lambda(S)=e^{-A_\lambda(S)} / Z(\lambda).\]  

Therefore, we set
\[
\chi(\lambda)=\frac{d^2F}{d\lambda^2}
=\mathrm{Var}_{\pi_\lambda}\!\bigl[\mathrm{Comp}(S)\bigr].
\]

\subsection{Two competing Models}
Intuitively, \(\chi(\lambda)\) quantifies how many different models \(S\) of varying complexity contribute to the free energy at a given \(\lambda\).  A large \(\chi\) means the Gibbs weight is split between two (or more) widely differing complexity levels, signaling a \emph{phase transition} in the loss-complexity landscape.

\begin{theorem}[Susceptibility Resonance]
Let \(S_1\) and \(S_2\) be the two lowest‐action configurations at a given \(\lambda\), with
\[
A_i = A_\lambda(S_i),
\quad
C_i = \mathrm{Comp}(S_i),
\quad
i=1,2,
\]
and assume all other \(S\) have strictly larger action.  Then
\[
\arg\max_{\lambda}\chi(\lambda)
\;=\;\{\lambda : A_1(\lambda)=A_2(\lambda)\},
\]
i.e.\ \(\chi\) is maximized exactly when the two different model's actions coincide. 
\end{theorem}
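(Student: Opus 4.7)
The plan is to exploit the two-state reduction that the hypothesis affords: since $S_1$ and $S_2$ dominate the Gibbs sum (all other configurations having strictly larger action), the partition function effectively collapses to a Bernoulli-type system, and the susceptibility becomes the variance of a two-point distribution. First I would write
\[
Z(\lambda)\;\approx\;e^{-A_1(\lambda)}+e^{-A_2(\lambda)},
\]
and introduce the Gibbs weights $p_i(\lambda)=e^{-A_i(\lambda)}/Z(\lambda)$, so that $p_1+p_2=1$. A routine calculation then gives
\[
\chi(\lambda)\;=\;\mathrm{Var}_{\pi_\lambda}\!\bigl[\mathrm{Comp}(S)\bigr]\;=\;p_1(\lambda)\,p_2(\lambda)\,(C_1-C_2)^2.
\]

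Next, since $(C_1-C_2)^2$ is a positive constant independent of $\lambda$ (the two configurations have distinct complexity, else they would merge into a single level), maximizing $\chi(\lambda)$ reduces to maximizing the product $p_1(\lambda)\,p_2(\lambda)$ over $\lambda\ge 0$. Setting $\Delta A(\lambda)=A_1(\lambda)-A_2(\lambda)$, one obtains
\[
p_1\,p_2 \;=\; \frac{e^{-A_1-A_2}}{(e^{-A_1}+e^{-A_2})^2}\;=\; \frac{1}{4\cosh^2\!\bigl(\Delta A(\lambda)/2\bigr)},
\]
which attains its unique global maximum $1/4$ precisely when $\Delta A(\lambda)=0$, i.e.\ when $A_1(\lambda)=A_2(\lambda)$. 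This yields the claimed characterization of $\arg\max_\lambda \chi(\lambda)$.

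The main obstacle I anticipate is making the two-state reduction genuinely rigorous rather than merely heuristic: the hypothesis only asserts that the remaining configurations have strictly larger action, not that their Boltzmann weights vanish outright. To upgrade $\approx$ to $=$ in the argument above I would either (i) read the assumption as a literal restriction of the configuration space to $\{S_1,S_2\}$, in the spirit of a two-level idealization of the loss--complexity phase transition, or (ii) reinstate an explicit temperature $T$ and appeal to a $T\to 0$ limit, under which contributions from all higher-action configurations are exponentially suppressed uniformly on any compact $\lambda$-interval avoiding further degeneracies. In case (ii) one still has to check that the suppressed tail cannot shift the location of the maximum; a standard perturbation estimate on $\chi$ near the crossing $A_1=A_2$ should close the gap.
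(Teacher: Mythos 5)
Your proposal follows essentially the same route as the paper: the two-state reduction of the partition function, the identity $\chi = p_1 p_2 (C_1-C_2)^2$, and maximizing $p_1 p_2$ at $A_1 = A_2$ (your $\cosh$ form is just a tidier way of stating the paper's final step). Your closing remarks on making the two-state truncation rigorous go beyond what the paper itself does, which simply takes the approximation as given.
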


\begin{proof}
In the two‐state approximation the partition function is
\[
Z \approx e^{-A_1} + e^{-A_2},
\]
and the Gibbs probabilities are
\[
\pi_i \;=\;\frac{e^{-A_i}}{e^{-A_1}+e^{-A_2}},
\quad i=1,2.
\]
The complexity variance reduces to
\[
\chi
=\pi_1\,\pi_2\,(C_1 - C_2)^2.
\]
Since \(C_1\neq C_2\) by assumption, the factor \((C_1-C_2)^2\) is constant in \(\lambda\), and
\[
\pi_1\,\pi_2
=\frac{e^{-A_1}e^{-A_2}}{(e^{-A_1}+e^{-A_2})^2}
\]
is maximized precisely when \(e^{-A_1}=e^{-A_2}\), i.e.\ \(A_1=A_2\).  Hence \(\chi(\lambda)\) peaks exactly at a \emph{resonance} point.
\end{proof}

Thus, assume that we have exactly two candidate models \(S_1,S_2\) with complexities \(C_i=\mathrm{Comp}(S_i)\) and losses \(L_i=\mathrm{Loss}(S_i)\). Each model's information-theoretical action is
\[
A_\lambda(S_i)=L_i+\lambda\,C_i,\quad i=1,2.
\]

A {\em resonance} at \(\lambda^*>0\) means
\[
L_1 + \lambda^* C_1 \;=\; L_2 + \lambda^* C_2,
\]
which is equivalent to 
\[
\lambda^* = \frac{L_1 - L_2}{C_2 - C_1}.
\]
Since by assumption \(C_1\neq C_2\), the losses of these two models must differ exactly by \(\lambda^*(C_2 - C_1)\), where $\lambda^*$ measures how strongly the difference in complexity affects the goodness-of-fit.  

Moreover,  \(\lambda^*>0\) occurs if and only if
\[
\frac{L_1 - L_2}{C_2 - C_1} > 0
\quad\Longleftrightarrow\quad
(L_1 - L_2)\;(C_2 - C_1) > 0.
\]
Thus, if \(C_2>C_1\) (i.e.\ \(S_2\) is more complex) then \(L_2<L_1\): the more complex model must also fit strictly better (smaller loss) for a crossing at positive \(\lambda\). If \((L_1 - L_2)(C_2 - C_1)\le0\), the two lines
\(\ell_i(\lambda)=L_i+\lambda C_i\) never meet for \(\lambda>0\).  One model then \emph{dominates} the envelope for all \(\lambda\), and there is no resonance.

As mentioned in the previous discussion of duality, plotting the lines \(\ell_i(\lambda) = L_i + \lambda C_i\), $i=1,2$, as functions of \(\lambda\), a positive‐\(\lambda\) intersection at \(\lambda^*\) produces exactly the ``kink'' or ``elbow'' in the lower envelope
\[
F(\lambda)=\min\{\ell_1(\lambda),\ell_2(\lambda)\}.
\]
For \(\lambda<\lambda^*\), the line with smaller loss \(L_i\) (but higher complexity $C_i$) attains the minimum; for \(\lambda>\lambda^*\), the line with smaller complexity \(C_i\) (but large loss $L_i$) takes over.  The switch at \(\lambda^*\) yields a slope discontinuity \(C_1\to C_2\), and hence a peak in the susceptibility \(\chi=d^2F/d\lambda^2\).
 
Thus, a positive resonance \(\lambda^*>0\) signals a genuine trade‐off between model's fit and complexity: the more complex model must achieve lower loss to ever be preferred.  The location of \(\lambda^*\) quantifies the exact balance point.  If no such positive resonance exists, one model is uniformly better (either strictly simpler with no loss penalty, or strictly better‐fitting with no complexity penalty), and no ``elbow'' appears in the plot of \(F(\lambda)\).

\subsection{General \(k\)-state Resonance}

Suppose that at a critical \(\lambda^*\) exactly \(k\) models \(S_1,\dots,S_k\) share the minimal action
\[
A_i^* \;=\; A_{\lambda^*}(S_i)\quad(i=1,\dots,k),
\]
and all other models have strictly larger action.  Write their complexities as \(C_i=\mathrm{Comp}(S_i)\).  Near \(\lambda^*\), let
\[
\lambda = \lambda^* + \varepsilon,
\qquad
A_i(\lambda)=A_i^*+\varepsilon\,C_i,
\]
and work in the two‐term low‐temperature (or \(T=1\)) Gibbs approximation
\[
Z(\varepsilon)\approx\sum_{i=1}^k e^{-A_i(\lambda)}
=\sum_{i=1}^k e^{-A_i^*}\,e^{-\varepsilon C_i}
= e^{-A^*}\sum_{i=1}^k e^{-\varepsilon C_i},
\]
where \(A^*=A_i^*\) for the degenerate minima.  The Gibbs weights become
\[
P_i(\varepsilon)
=\frac{e^{-\varepsilon C_i}}{\sum_{j=1}^k e^{-\varepsilon C_j}}.
\]
The susceptibility is
\[
\chi(\varepsilon)
=\mathrm{Var}_{P(\varepsilon)}[C]
=\sum_{i=1}^k P_i(\varepsilon)\,C_i^2
-\Bigl(\sum_{i=1}^k P_i(\varepsilon)\,C_i\Bigr)^2.
\]

\paragraph{Stationarity at \(\varepsilon=0\).}
At \(\varepsilon=0\), all \(P_i(0)=1/k\) and
\(\sum_iP_i'C_i=0\) by symmetry, so
\(\chi'(\varepsilon)\big|_{\varepsilon=0}=0\).  Thus \(\chi\) is stationary at \(\lambda^*\).

\paragraph{Second derivative and peak width.}
Compute the second derivative at \(\varepsilon=0\).  Expanding
\[
P_i(\varepsilon)
=\frac{1-\varepsilon C_i + O(\varepsilon^2)}{k - \varepsilon\sum_j C_j + O(\varepsilon^2)}
=\frac1k - \frac{\varepsilon}{k}\Bigl(C_i-\bar C\Bigr) + O(\varepsilon^2),
\quad
\bar C=\frac1k\sum_{j=1}^k C_j,
\]
one finds after straightforward algebra
\[
\chi(\varepsilon)
=\frac1k\sum_i(C_i-\bar C)^2
\;-\;\frac{\varepsilon^2}{k}\sum_i(C_i-\bar C)^4
\;+\;O(\varepsilon^3).
\]
The \(\varepsilon^2\) coefficient is strictly negative provided not all \((C_i-\bar C)\) vanish.  Hence \(\chi\) has a \emph{strict maximum} at \(\varepsilon=0\), i.e.\ at
\(\lambda=\lambda^*\).

\paragraph{Scaling of the peak width.}
The width \(\Delta\lambda\) over which \(\chi\) falls to half its peak value satisfies
\(\Delta\chi\approx -\,\tfrac12\chi''(0)\,(\Delta\lambda)^2\), so
\[
\Delta\lambda \;=\; O\!\left(\frac{\sum_i(C_i-\bar C)^2}{\sum_i(C_i-\bar C)^4}\right)^{\!1/2}
=\;O\!\bigl(\min_{i\ne j}|C_i-C_j|\bigr)^{-1}.
\]
Thus the resonance becomes sharper as the complexity‐gaps \(|C_i-C_j|\) grow, quantifying the universality of phase‐transition peaks in the loss-complexity landscape.

\subsection{Designing Complexity Functionals}
\label{subsec:comp-design}

The central choice in our framework is the computable complexity functional
\(\mathrm{Comp}(S)\).  In contrast to Kolmogorov complexity which is machine–dependent
but invariant up to additive constants (and fundamentally non-computable) we deliberately
work with task–dependent complexity proxies.  Each such choice
induces an \emph{a priori} different model structure function
\[
  h_x(\alpha)
  \;=\;
  \min_{\substack{S\ni x\\ \mathrm{Comp}(S)\le \alpha}}\mathrm{Loss}(S),
\]
and, via Legendre--Fenchel duality, different free energy \(F(\lambda)\) and
susceptibility \(\chi(\lambda)\).  The phase transitions we observe are therefore
always relative to an explicit notion of complexity chosen by the practitioner.

Thus, we only require the following mild axioms to hold for the complexity proxy
$\mathrm{Comp}$:
\begin{itemize}
  \item \emph{Monotonicity:} if a model class $S'$ strictly enlarges $S$ (e.g. by increasing depth, width, or number of parameters),
        then
        \[
          \mathrm{Comp}(S') \;\ge\; \mathrm{Comp}(S) - O(1),
        \]
        hence making the class more expressive cannot substantially
        decrease its complexity;

  \item \emph{Approximate additivity:} for composite models $S = S_1 \cup S_2$ such that $S_1 \cap S_2 = \emptyset$,
        \[
          \mathrm{Comp}(S)
          \;=\;
          \mathrm{Comp}(S_1) + \mathrm{Comp}(S_2) + O(1),
        \]
        where the $O(1)$ term does not depend on the particular models
        $S_1,S_2$ but only on the choice of encoding;

  \item \emph{Invariance under trivial reparametrizations:}
        if $S$ and $S'$ are different encodings of the same architecture
        and hyperparameters (e.g.\ reshuffled parameter vector, equivalent
        graph representation), then
        \[
          \mathrm{Comp}(S') \;=\; \mathrm{Comp}(S) + O(1).
        \]
\end{itemize}

When \(\mathrm{Comp}(S)\) is chosen as an explicit codelength for the model class
\(S\) (architecture, parameters and training scheme), the action
\[
  A_\lambda(S) \;=\; \mathrm{Loss}(S) + \lambda\,\mathrm{Comp}(S)
\]
is directly analogous to a \emph{minimum description length} (MDL) objective. The
data misfit \(\mathrm{Loss}\) plays the role of the negative log-likelihood and
\(\mathrm{Comp}(S)\) plays the role of the model codelength.  In classical MDL one
fixes a particular coding scheme and seeks the model minimizing the sum of these
two terms.  We make the dependence on the Lagrange multiplier
\(\lambda\) explicit, compute the dual free energy \(F(\lambda)\), and study its
derivatives (in particular, the susceptibility \(\chi(\lambda)\)) as witnesses
of phase transitions in the loss–complexity landscape.

Beyond MDL-style codelengths, there is a growing body of work on computable
approximations to Kolmogorov complexity \cite{zenil2018, zenil2023, johnston2022, liu2022, zhang2024}.  These can
all be used as alternative choices of \(\mathrm{Comp}(S)\) in our framework:
once a computable complexity proxy is fixed, the Legendre--Fenchel duality and susceptibility arguments
proceed in complete analogy.  

In this sense our results are
agnostic to the particular proxy used, while different choices of \(\mathrm{Comp}(S)\)
become practical in order to emphasize different inductive biases (e.g.\ sparsity, hierarchical structure, or compressibility) in the resulting phase–transition analysis.

\section{Numerical Validation and Experiments}\label{section:experiments}

\subsection{Experiment Setup}

In order to test the the above theory computationally, we implement several regression tasks: polynomial regression, Fourier expansions, and tree-based models. All experiments explicitly match theoretical predictions and align with empirical overfitting thresholds. 

The task at hand is to learn a function from noisy data. The function, for simplicity, being just $f(x) = \sin 2 n \pi x$, $x \in [0,1]$. The noise is sampled from the normal distribution $N(0, \sigma^2)$. 

The choice of the complexity function is very obvious: number of coefficients $\mathrm{Comp}(S) = d+1$ for polynomial degree $d$ regression, the number $\mathrm{Comp}(S) = 2 d + 1$ of Fourier coefficients up to mode $d$, and $\mathrm{Comp}(S) = d$ for the depth $d$ tree regressor. 

Thus, we may simply set 
\[
\mathrm{Comp}(S) = d, \; \mathrm{Loss}(S) = \mathrm{MSE}_{train}(S),
\]
in all three cases. Here, the mean squared error $\mathrm{MSE}_{train}$ is measured on the noisy \emph{train dataset}, while a clean and noisy \emph{test datasets} are kept apart. 

In our experiment we vary the Lagrange multiplier \(\lambda\) in  
\[
A_\lambda(d)\;=\;\mathrm{Loss}(d)\;+\;\lambda\,d,
\]  
and for each \(\lambda\) record the optimal depth  
\[
d^*(\lambda)\;=\;\arg\min_{d}\bigl[\mathrm{Loss}(d)+\lambda\,d\bigr].
\]  
Setting \(\alpha=d^*(\lambda)\) and  
\[
h(\alpha)\approx\mathrm{Loss}\bigl(d^*(\lambda)\bigr)
\]  
produces a discrete approximation to the structure function  
\[
h(\alpha)
=\min_{d\le\alpha}\mathrm{Loss}(d).
\]  
The pronounced ``elbow'' in the test‐MSE versus depth curve is exactly the dual reflection of the ``kink'' in the free energy  
\[
F(\lambda)=\min_d\bigl[\mathrm{Loss}(d)+\lambda\,d\bigr],
\]  
which occurs at the critical \(\lambda\) where two complexities exchange as the global minimizer.  By Legendre–Fenchel duality, both are manifestations of the same underlying phase‐transition phenomenon. 

After plotting the shape of $h(\alpha)$ with $\mathrm{Loss}$ being the train loss $\mathrm{MSE}_{train}$, we make an analogous plot for $\mathrm{MSE}_{test}$, on the test dataset (which can be either clean or noisy). The idea is to compare the shapes: for example, the train and test shapes may be similar, or the test shape may show overfitting for higher model complexities. 

\subsection{Linear models}

In this section, we perform two simple experiments on learning the noisy function $f(x) = \sin 2 n \pi x + \varepsilon$, $x \in [0,1]$, $n\in\{4,6\}$, $\varepsilon \sim N(0,\sigma^2)$ by way of linear regression. Ostensibly, we use polynomial and Fourier regressors, however the nature of such regressors is known to be linear (they are nothing more than projections on linear subspaces in function spaces). The Loss vs. Complexity curves are depicted in Figure~\ref{fig:poly} (polynomial regression) and Figure~\ref{fig:four} (Fourier series). 

The phase transitions between models delivering qualitatively different goodness-of-fit are visible in Figures~\ref{fig:poly-low-4pane}--\ref{fig:poly-high-4pane}.

Note that the phase transition happen at the same levels of complexity (compare the left and right panes of Figure~\ref{fig:poly}) independent of the noise level, and the qualitative behavior of the approximating polynomials also remains similar in Figure~\ref{fig:poly-low-4pane} and Figure~\ref{fig:poly-high-4pane}. Also, note that the loss on the \emph{clean} test dataset is always getting low as complexity grows independent on the level of noise: the latter, of course, is reasonably low in both cases, though differs by an order of magnitude. This shows that linear models are relatively robust to overfitting.

The picture is almost evident for the Fourier regression on $f(x) = \sin  4 \pi x + \varepsilon$, $x \in [0,1]$, $\varepsilon \sim N(0,\sigma^2)$, with both low $\sigma=0.05$ and high $\sigma=0.3$ noise. The complexity drops sharply as we reach the actual Fourier mode $\sin 2 \pi k x$ with $k=2$ leaving only the train and test (on the \emph{noisy} dataset) to differ, while the \emph{clean} test dataset loss confirms we have completely recovered the original generating model. This ``phase transition'' is indeed expected, which further confirms that our theory. In and of itself, however, this example may be less convincing: we approximate a trigonometric function via Fourier expansion, which is a mathematically trivial task.  

All code used to produce the above examples is available on GitHub, and the computation can be reproduced in the Google Colab environment \cite{github-structure}. 

\begin{figure}[!htbp]
    \centering
    \begin{subfigure}[t]{0.45\linewidth}
        \centering
        \includegraphics[width=\linewidth]{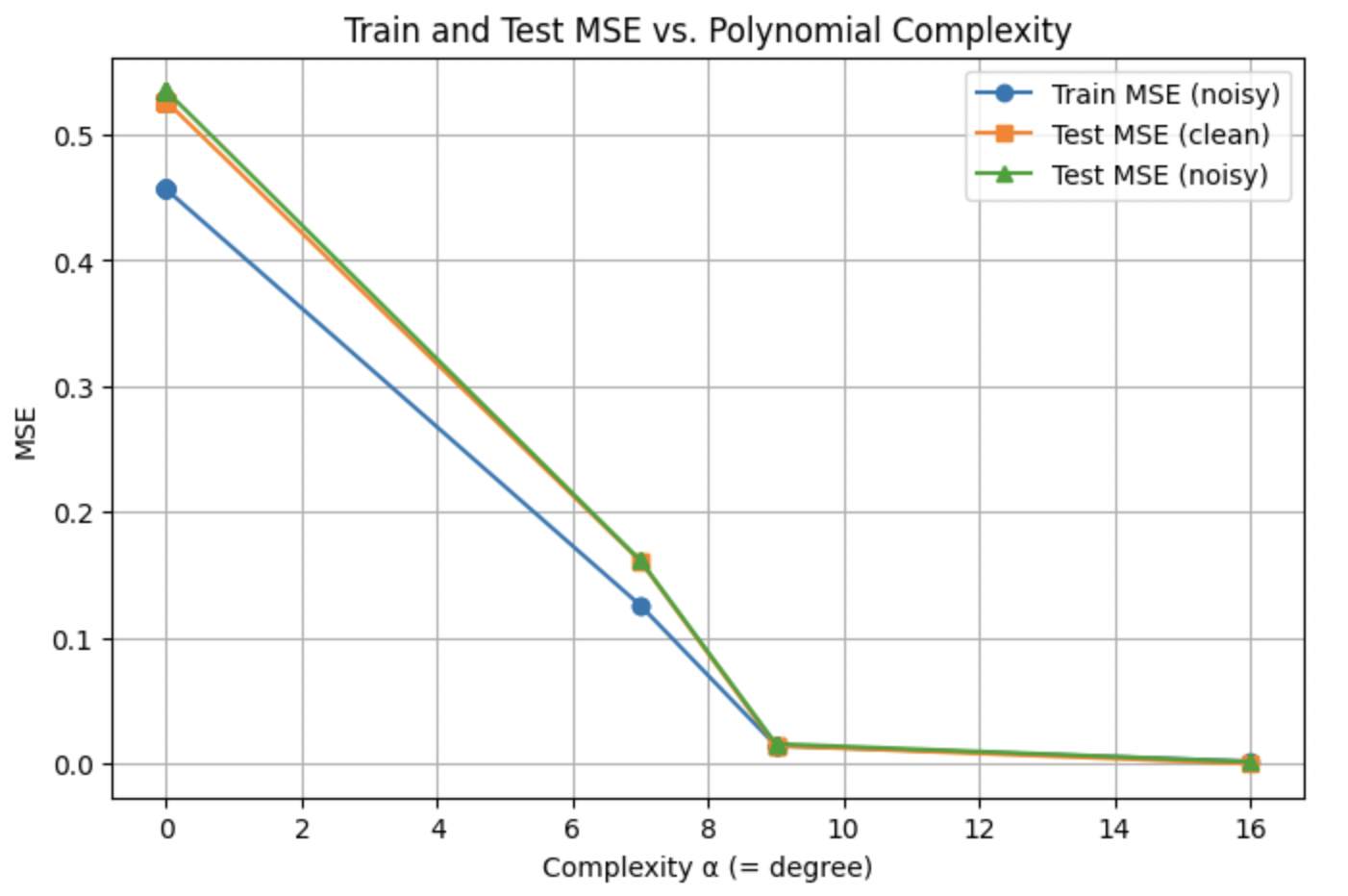}
        \caption{$\sigma = 0.05$}
        \label{fig:poly-low-noise}
    \end{subfigure}
    \hfill
    \begin{subfigure}[t]{0.45\linewidth}
        \centering
        \includegraphics[width=\linewidth]{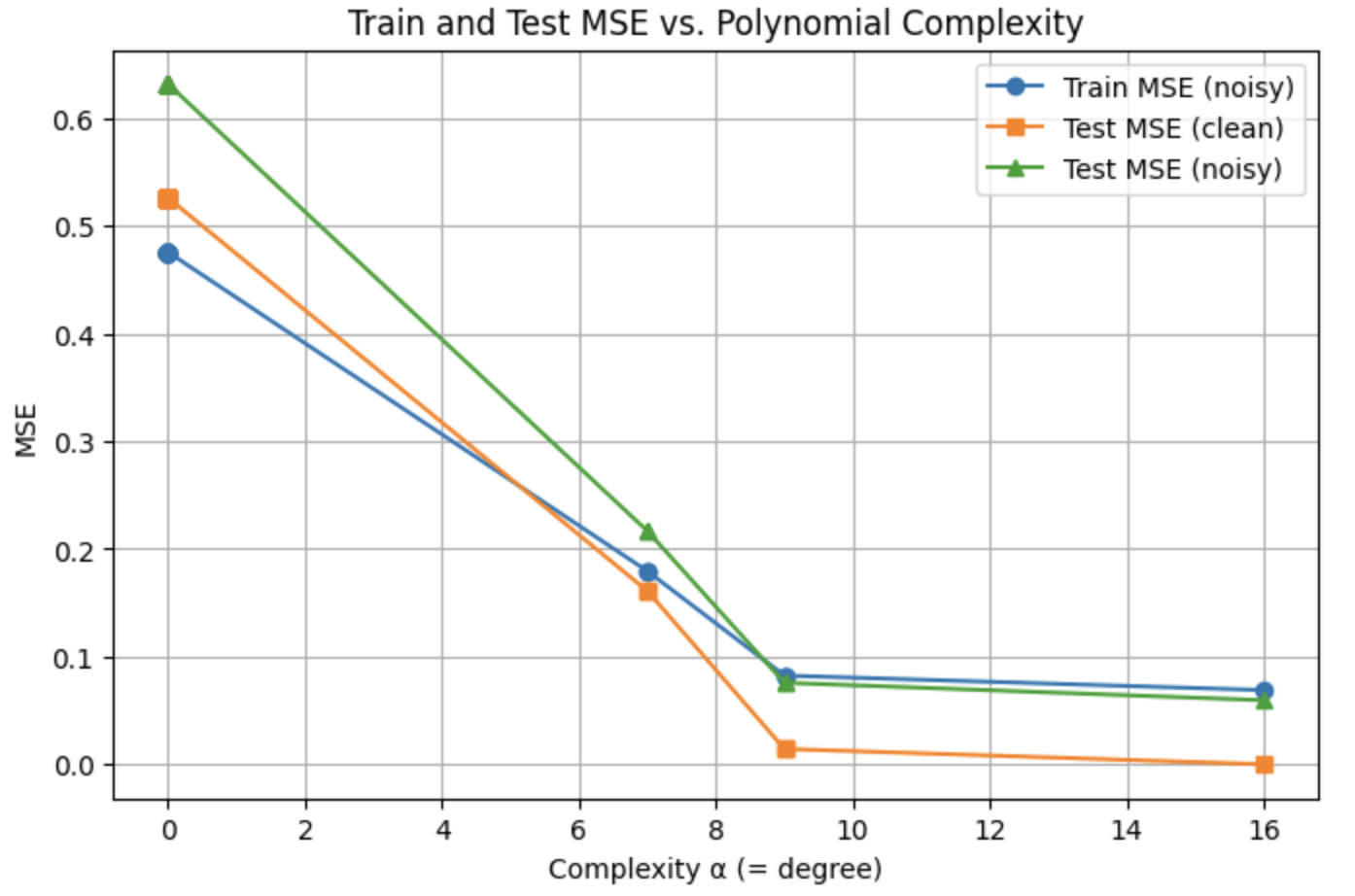}
        \caption{$\sigma=0.3$}
        \label{fig:poly-high-noise}
    \end{subfigure}
    \caption{
        Loss vs.\ Complexity for polynomial regression on 
        $f(x)=\sin(6\pi x) + \varepsilon$ with Gaussian noise $\varepsilon \sim N(0, \sigma^2)$
    }
    \label{fig:poly}
\end{figure}

\begin{figure}[!htbp]
  \centering
  \begin{subfigure}[t]{0.48\textwidth}
    \centering
    \includegraphics[width=\linewidth]{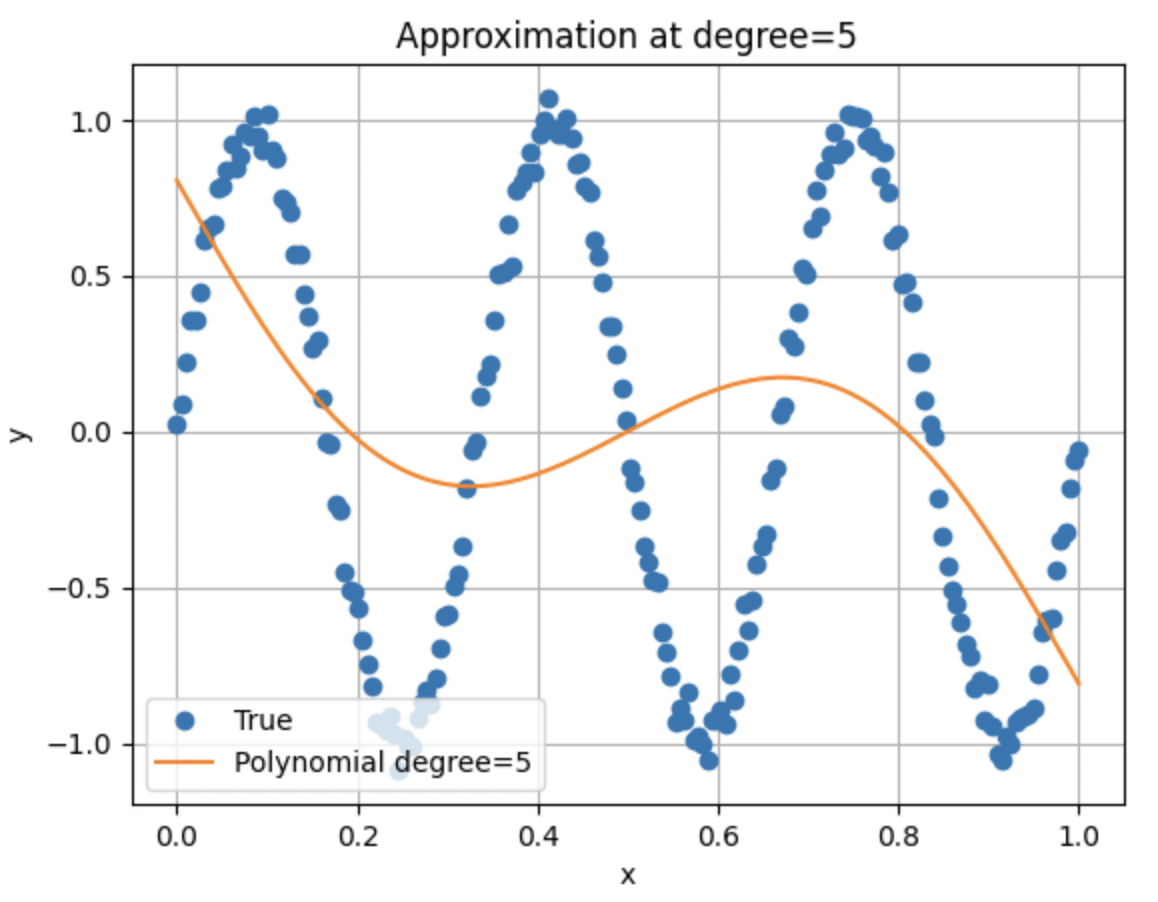}
    \caption{$d=5$, $\sigma=0.05$}
    \label{fig:poly-low-4pane-a}
  \end{subfigure}
  \hfill
  \begin{subfigure}[t]{0.48\textwidth}
    \centering
    \includegraphics[width=\linewidth]{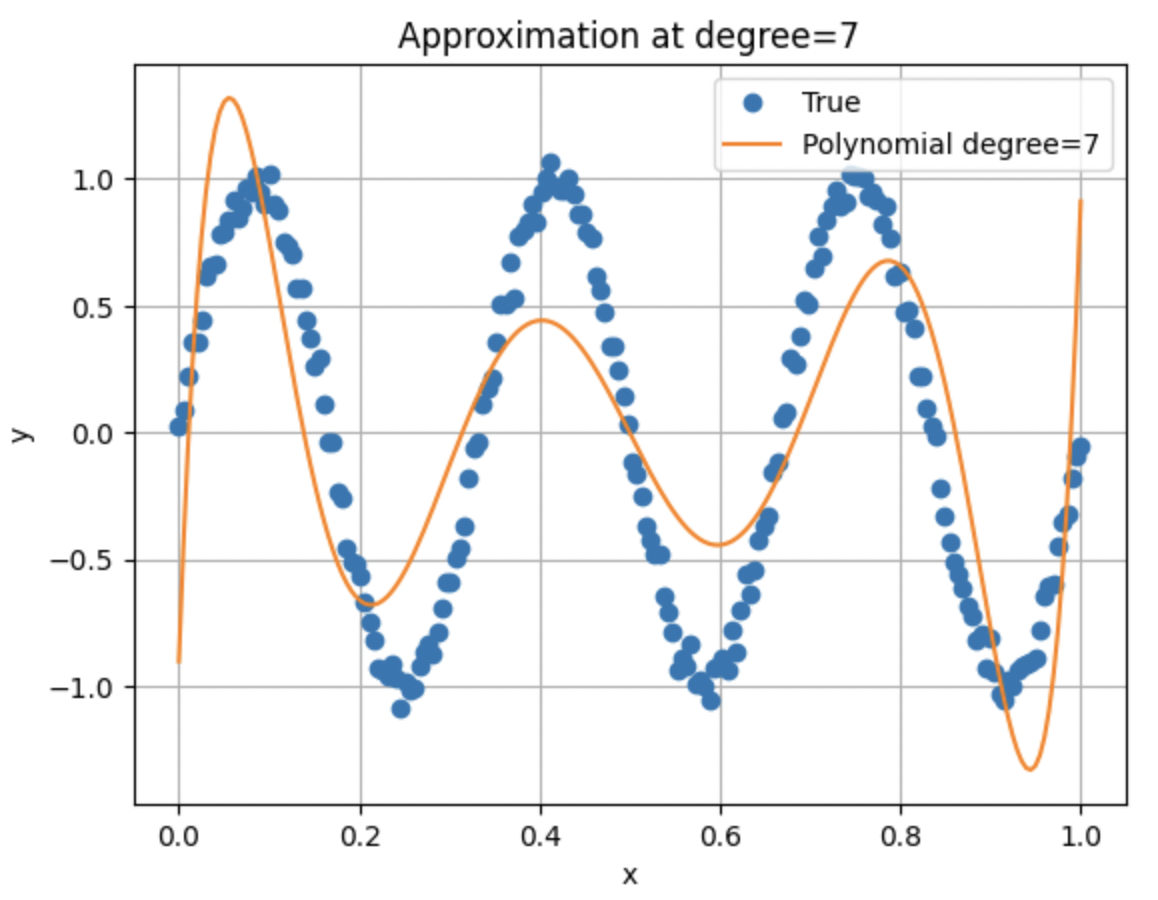}
    \caption{$d=7$, $\sigma=0.05$}
    \label{fig:poly-low-4pane-b}
  \end{subfigure}

  \vskip\baselineskip

  \begin{subfigure}[t]{0.48\textwidth}
    \centering
    \includegraphics[width=\linewidth]{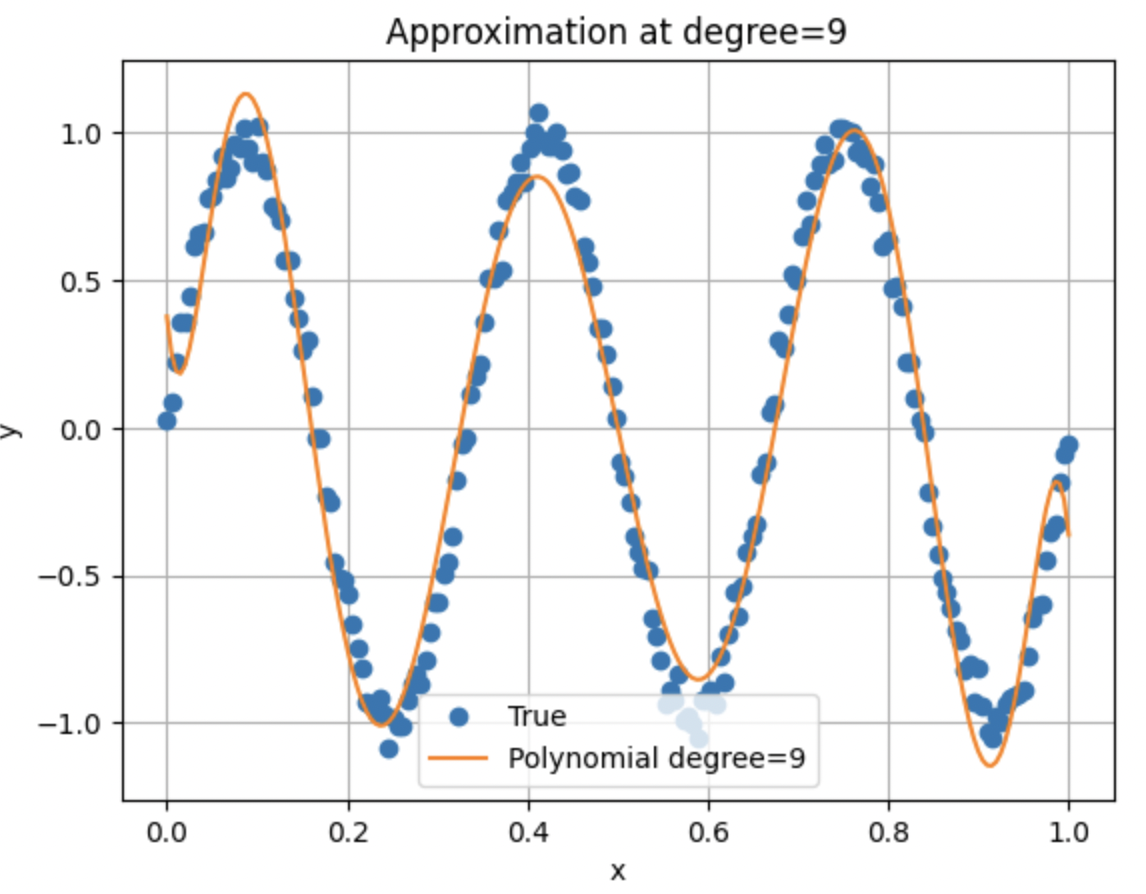}
    \caption{$d=9$, $\sigma=0.05$}
    \label{fig:poly-low-4pane-c}
  \end{subfigure}
  \hfill
  \begin{subfigure}[t]{0.48\textwidth}
    \centering
    \includegraphics[width=\linewidth]{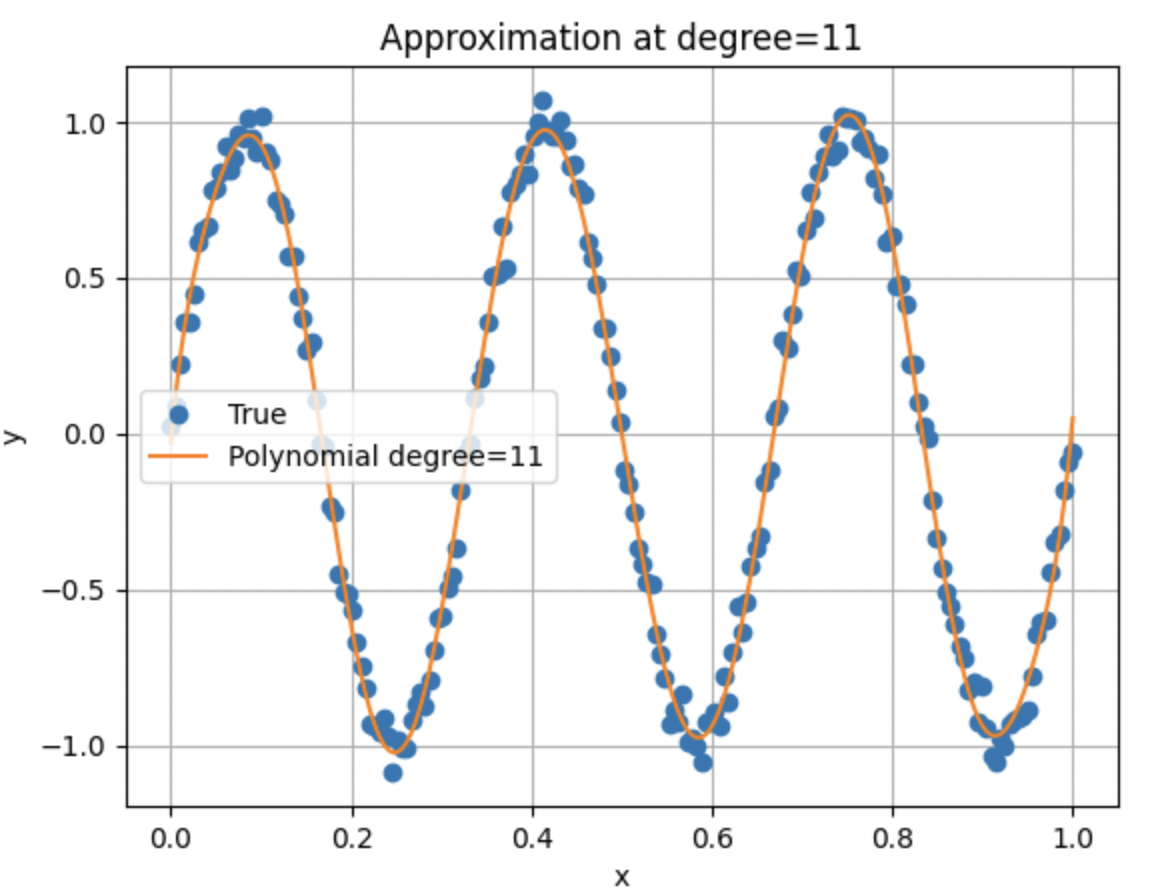}
    \caption{$d=11$, $\sigma=0.05$}
    \label{fig:poly-low-4pane-d}
  \end{subfigure}

  \caption{
  Polynomial regression for $f(x)=\sin(6\pi x) + \varepsilon$ with Gaussian noise $\varepsilon \sim N(0, \sigma^2)$ and varying polynomial degrees $d$. Note that the Loss vs. Complexity curve has ``elbows'' at $d=7$ and $d=9$. There are visible ``phase transitions'' in the shape of the polynomial vs the data  at $d = 5, 7, 9, 11$, while in between these values the regression curve shape stays relatively the same, and tends to stabilize after $d=11$. 
  }
  \label{fig:poly-low-4pane}
\end{figure}

\begin{figure}[!htbp]
  \centering
  \begin{subfigure}[t]{0.48\textwidth}
    \centering
    \includegraphics[width=\linewidth]{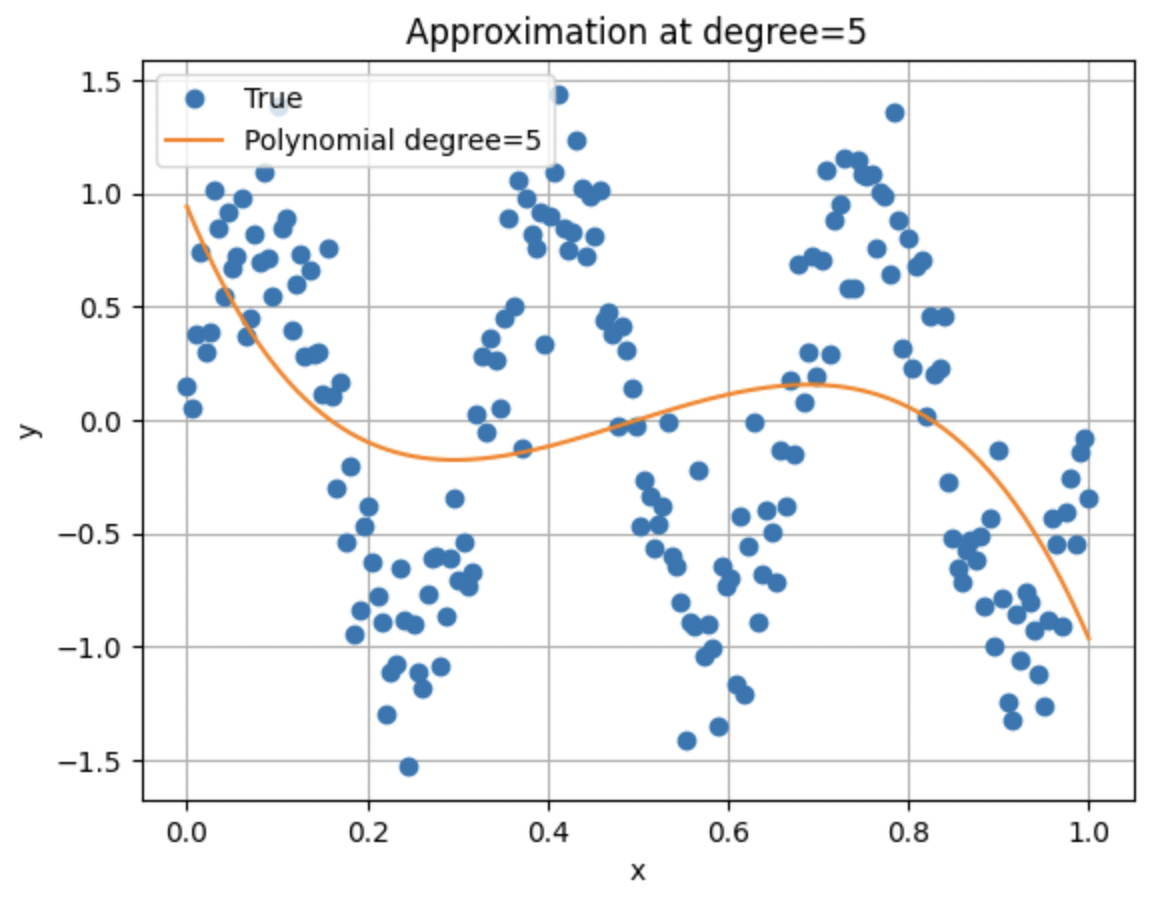}
    \caption{$d=5$, $\sigma=0.3$}
    \label{fig:poly-high-4pane-a}
  \end{subfigure}
  \hfill
  \begin{subfigure}[t]{0.48\textwidth}
    \centering
    \includegraphics[width=\linewidth]{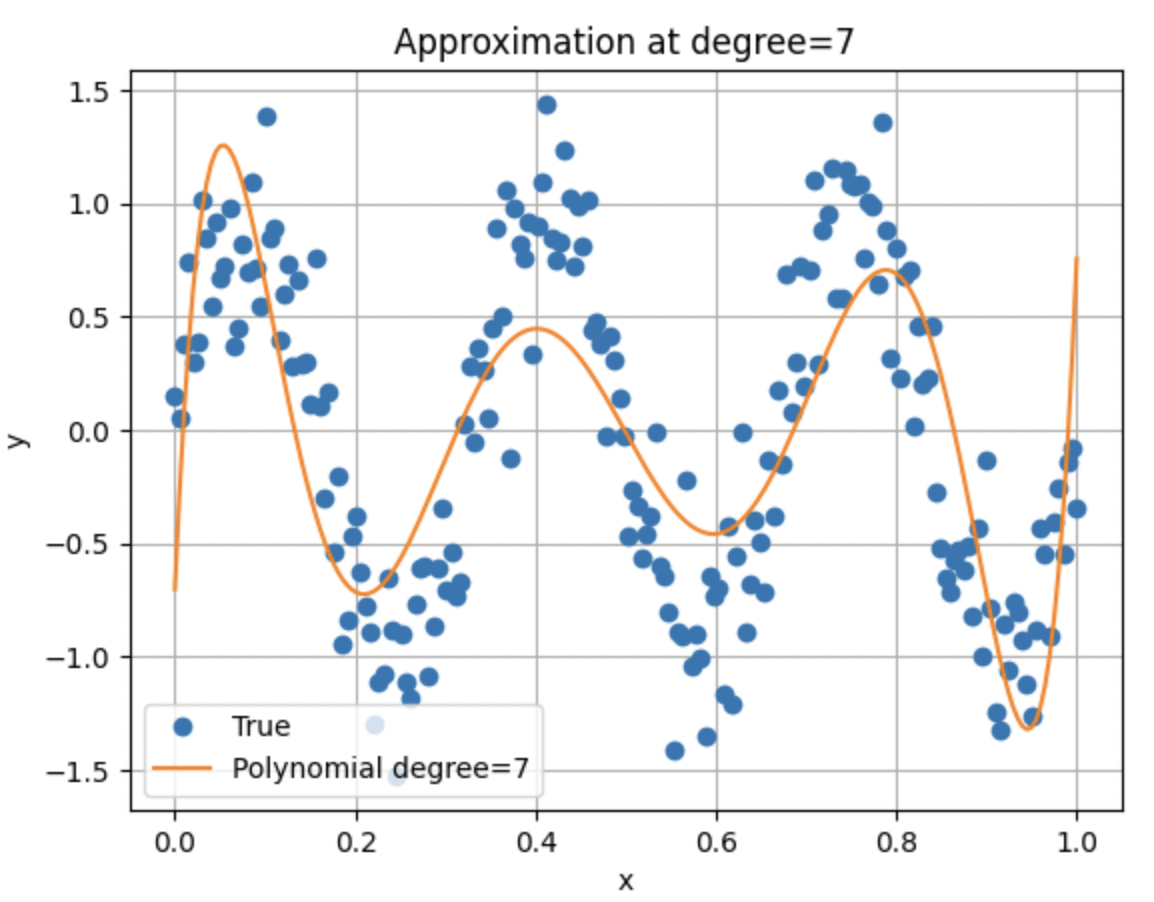}
    \caption{$d=7$, $\sigma=0.3$}
    \label{fig:poly-high-4pane-b}
  \end{subfigure}

  \vskip\baselineskip

  \begin{subfigure}[t]{0.48\textwidth}
    \centering
    \includegraphics[width=\linewidth]{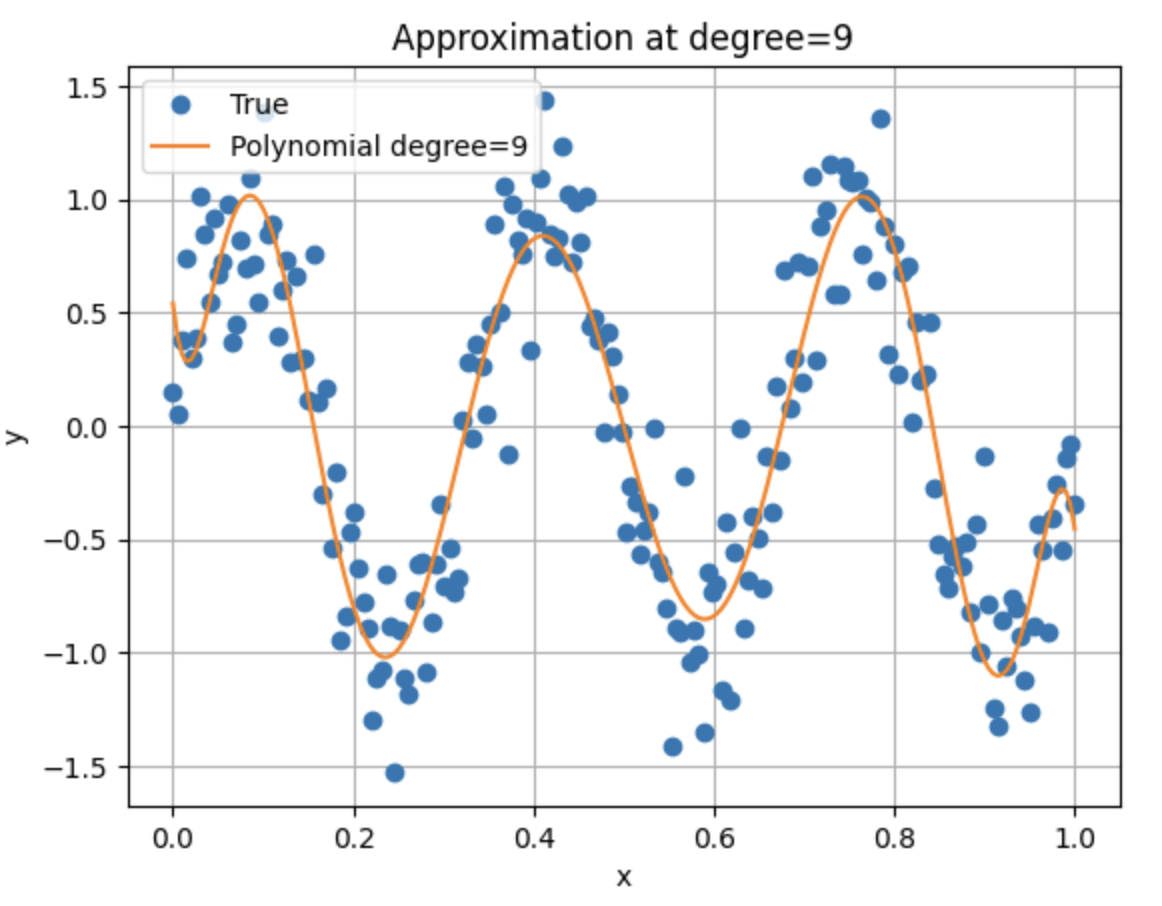}
    \caption{$d=9$, $\sigma=0.3$}
    \label{fig:poly-high-4pane-c}
  \end{subfigure}
  \hfill
  \begin{subfigure}[t]{0.48\textwidth}
    \centering
    \includegraphics[width=\linewidth]{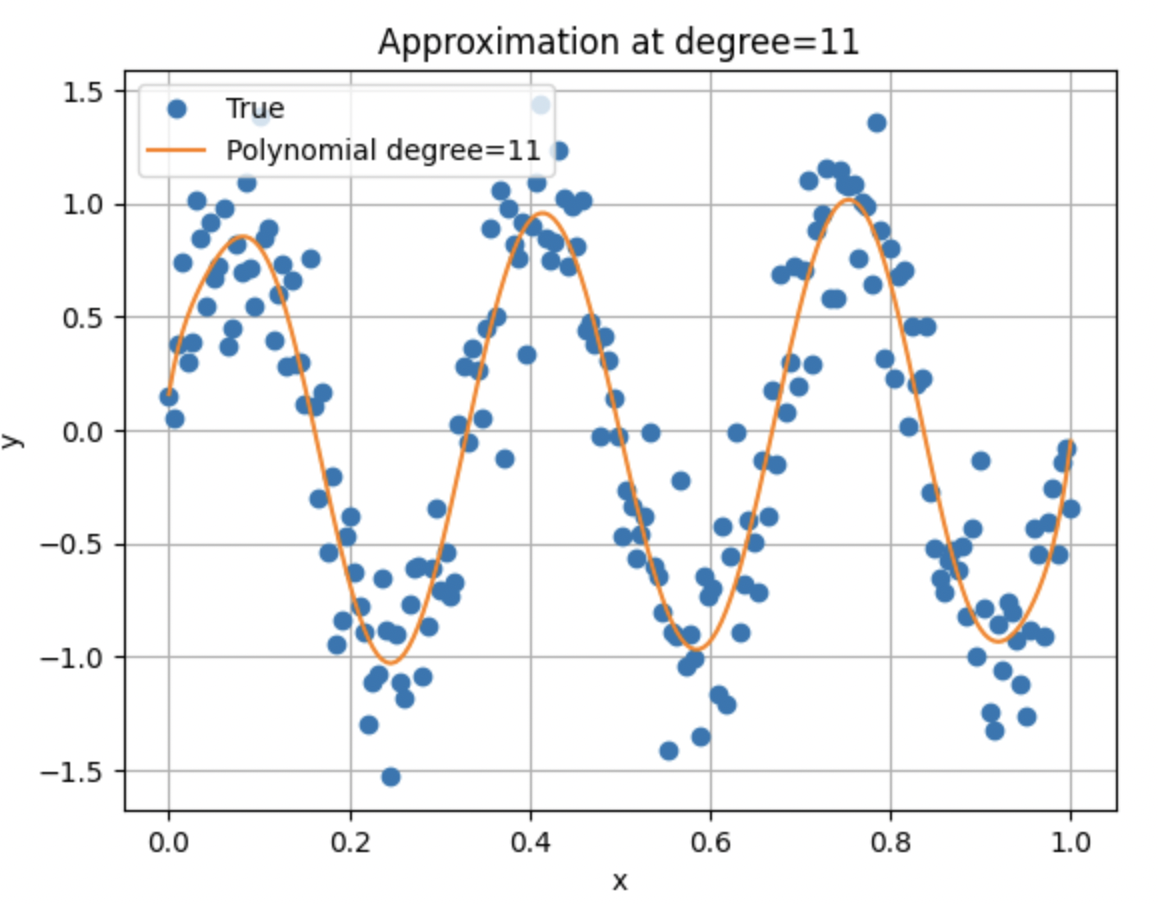}
    \caption{$d=11$, $\sigma=0.3$}
    \label{fig:poly-high-4pane-d}
  \end{subfigure}

  \caption{
  Polynomial regression for $f(x)=\sin(6\pi x) + \varepsilon$ with Gaussian noise $\varepsilon \sim N(0, \sigma^2)$ and varying polynomial degrees $d$. Note that the Loss vs. Complexity curve has ``elbows'' at $d=7$ and $d=9$. There are visible ``phase transitions'' in the shape of the polynomial vs the data  at $d = 5, 7, 9, 11$, while in between these values the regression curve shape stays relatively the same, and tends to stabilize after $d=11$.
  }
  \label{fig:poly-high-4pane}
\end{figure}

\begin{figure}[!htbp]
    \centering
    \begin{subfigure}[t]{0.45\linewidth}
        \centering
        \includegraphics[width=\linewidth]{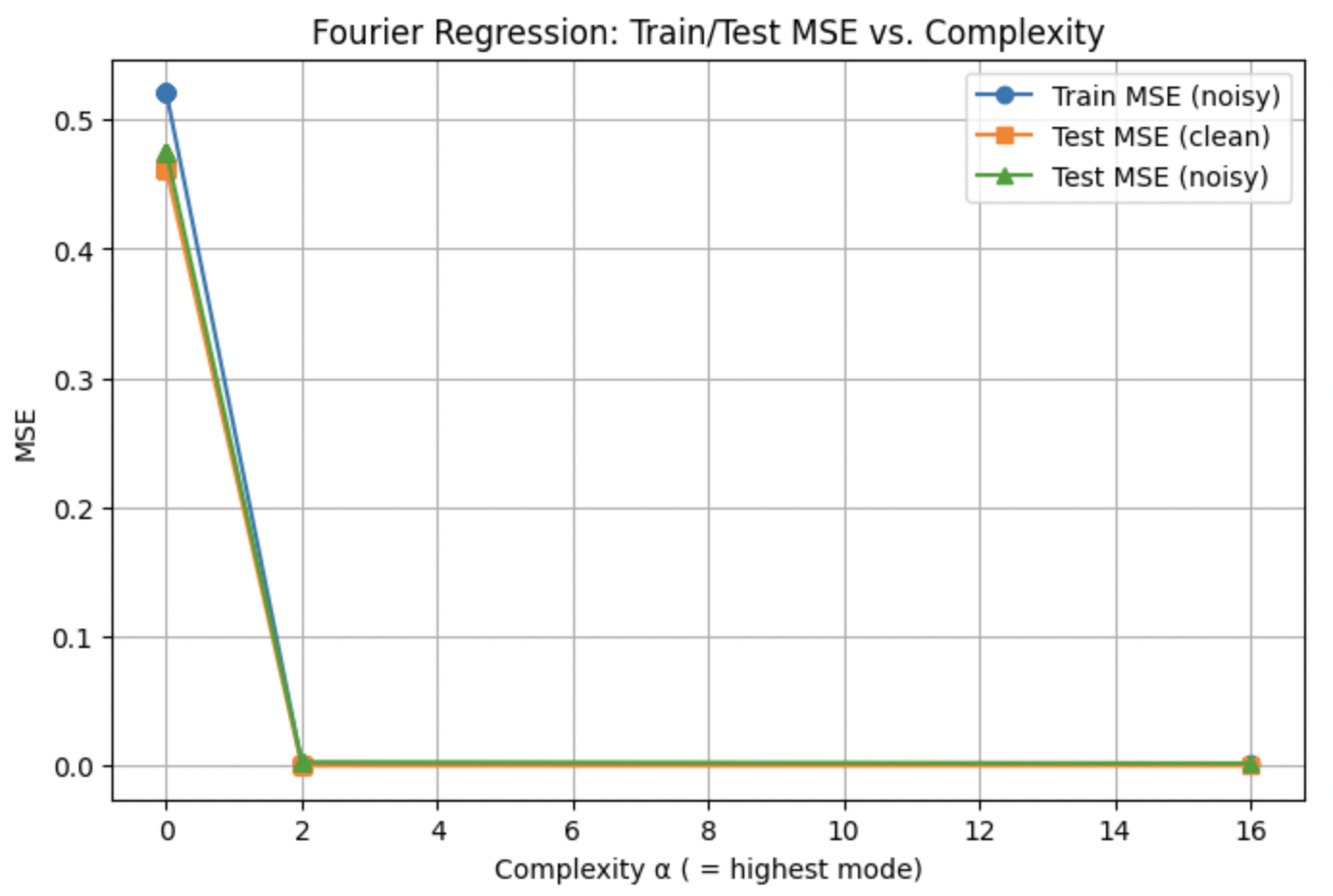}
        \caption{$\sigma = 0.05$}
        \label{fig:four-low-noise}
    \end{subfigure}
    \hfill
    \begin{subfigure}[t]{0.45\linewidth}
        \centering
        \includegraphics[width=\linewidth]{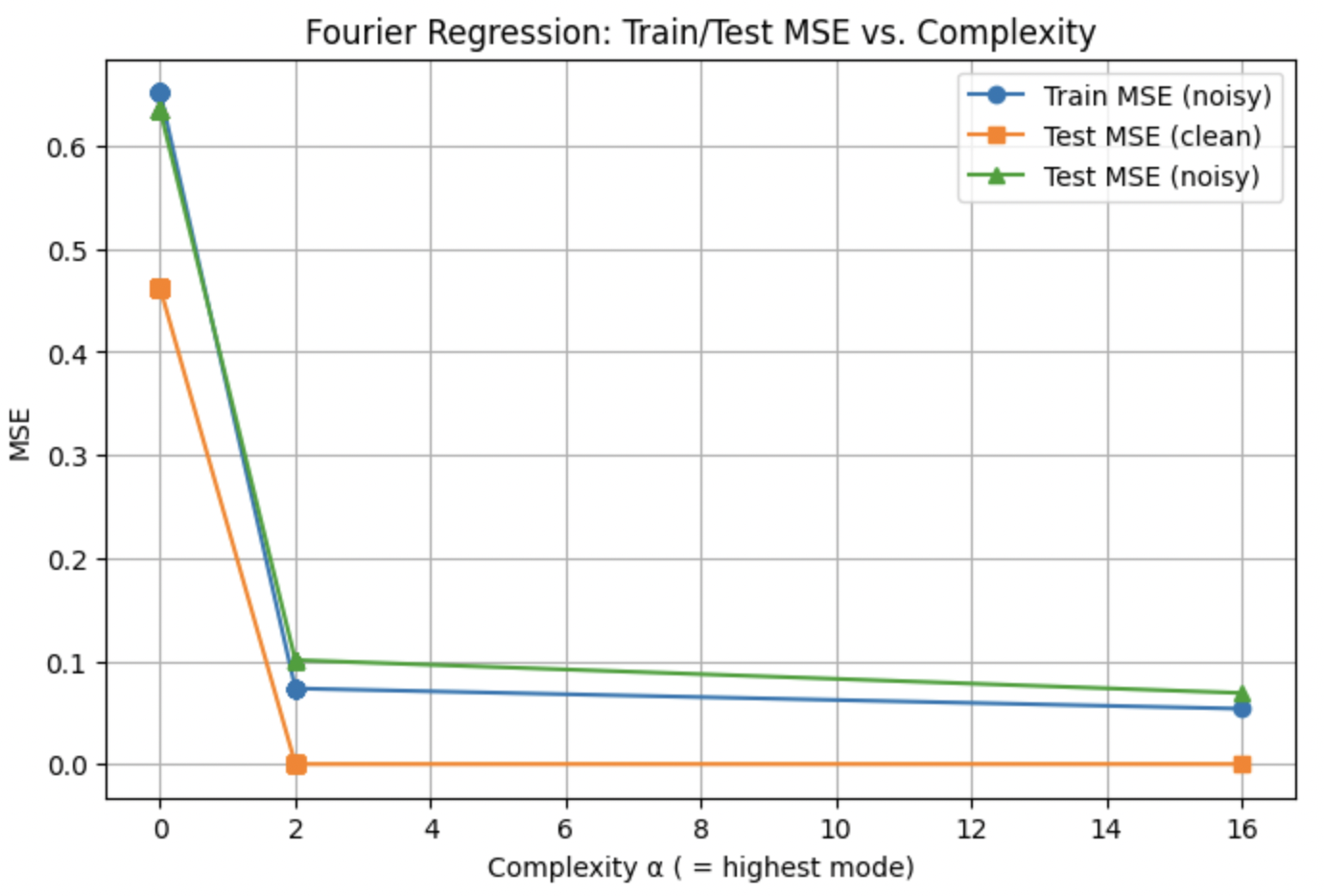}
        \caption{$\sigma=0.3$}
        \label{fig:four-high-noise}
    \end{subfigure}
    \caption{
        Loss vs.\ Complexity for polynomial regression on 
        $f(x)=\sin(4\pi x) + \varepsilon$ with Gaussian noise $\varepsilon \sim N(0, \sigma^2)$
    }
    \label{fig:four}
\end{figure}

\subsection{Tree-based models}

In the tree regressor experiment, we notice that overfitting starts right after the optimal depth. Indeed, \emph{the test loss} shows divergence after the optimal threshold. The loss function used is the usual sum of squared errors (SSE) instead of the mean squared error (MSE) only for scaling reasons: this way the overfitting threshold becomes more visible in the graphs. Here we used HyperOpt \cite{bergstra2013making} for speed as opposed to Simulated Annealing. 

The case of high noise level (Gaussian with $\mu = 0, \sigma = 0.3$) is most interesting as it shows how overfitting starts after a certain point: a salient feature is a ``dip'' instead of a simple ``elbow'' indicating a rise in the test dataset loss as the model complexity gets large enough. This can be easily seen from comparison of Figures~\ref{fig:tree-low-noise} and \ref{fig:tree-high-noise}.

\begin{figure}[!htbp]
    \centering
    \begin{subfigure}[t]{0.45\linewidth}
        \centering
        \includegraphics[width=\linewidth]{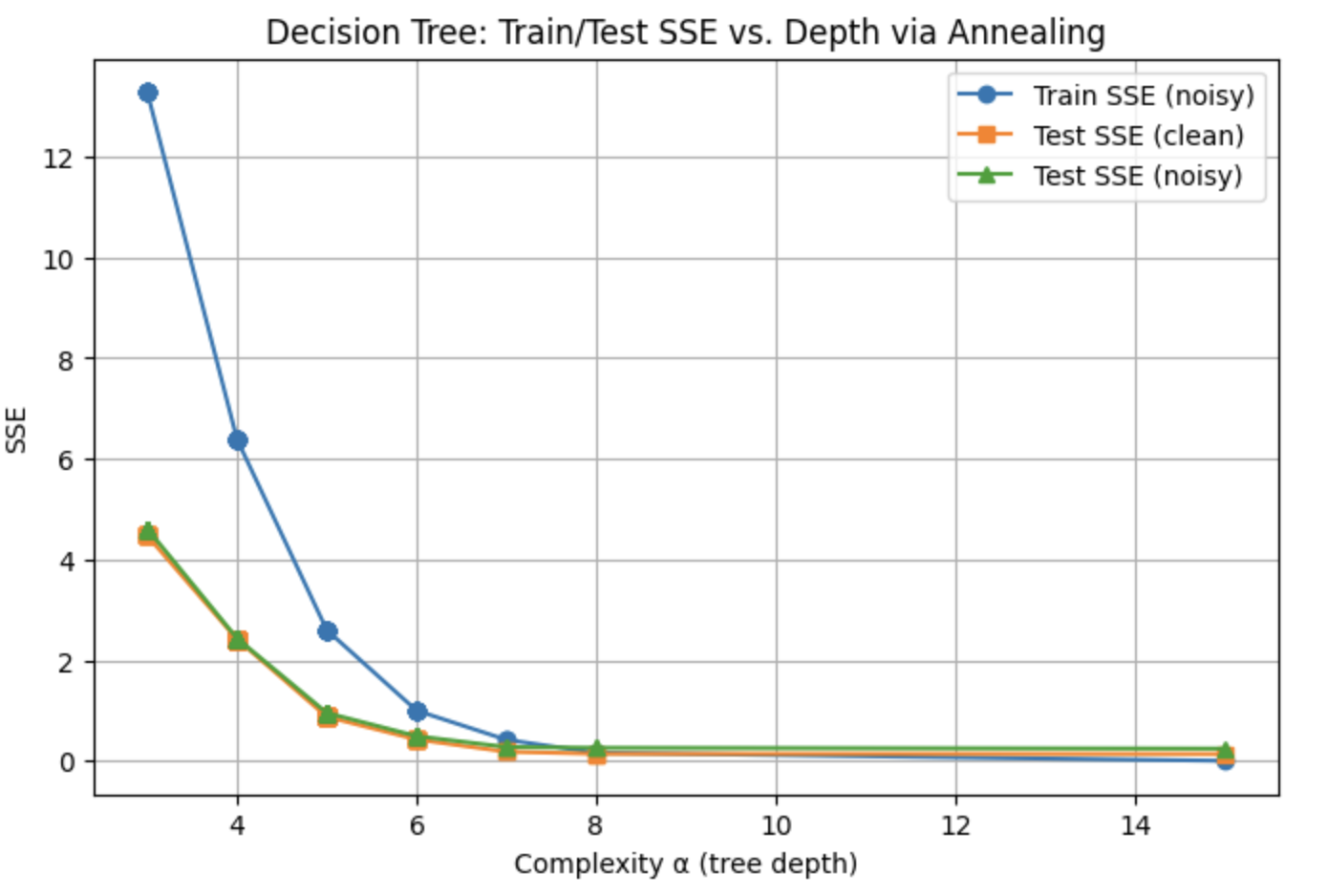}
        \caption{Simulated annealing: SSE loss computed for all datasets}
        \label{fig:anneal-low-noise}
    \end{subfigure}
    \hfill
    \begin{subfigure}[t]{0.45\linewidth}
        \centering
        \includegraphics[width=\linewidth]{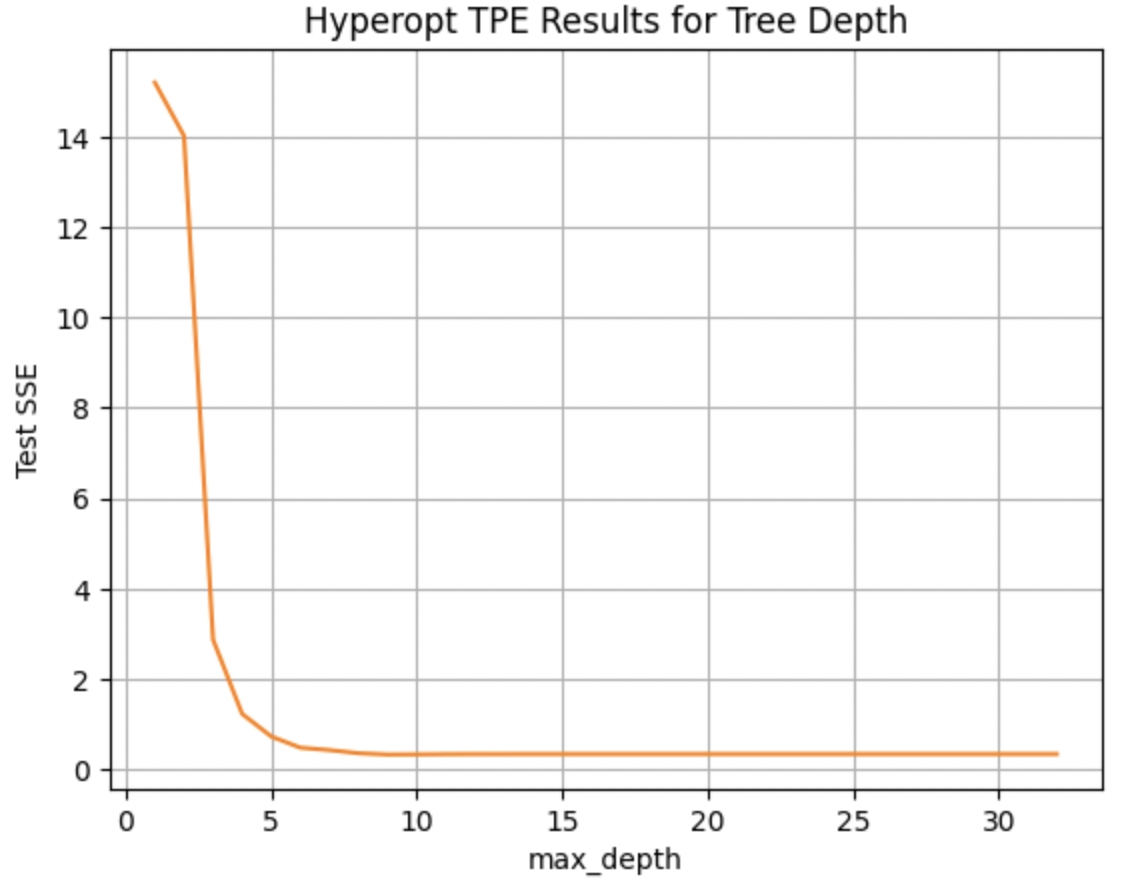}
        \caption{Tree-structured Parzen Estimator (TPE) search: noisy test dataset loss}
        \label{fig:tpe-low-noise}
    \end{subfigure}
    \caption{
        Loss vs.\ Complexity for a decision tree regressor on 
        $f(x)=\sin(4\pi x) + \varepsilon$ with Gaussian noise $\varepsilon \sim N(0, \sigma^2)$. Here $\sigma=0.05$, a low noise level case.
    }
    \label{fig:tree-low-noise}
\end{figure}

\begin{figure}[!htbp]
    \centering
    \begin{subfigure}[t]{0.45\linewidth}
        \centering
        \includegraphics[width=\linewidth]{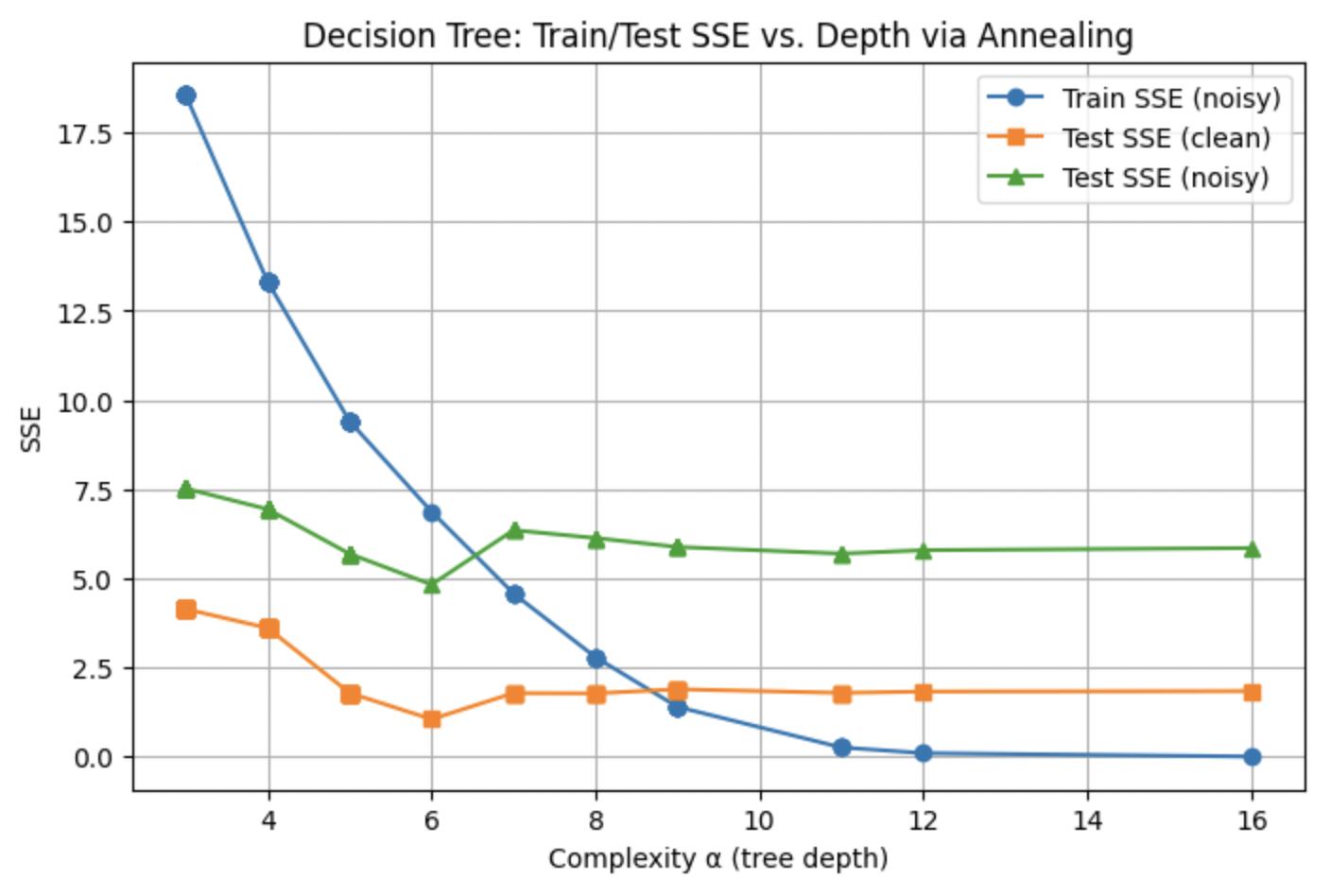}
        \caption{Simulated annealing: SSE loss computed for all datasets}
        \label{fig:anneal-high-noise}
    \end{subfigure}
    \hfill
    \begin{subfigure}[t]{0.45\linewidth}
        \centering
        \includegraphics[width=\linewidth]{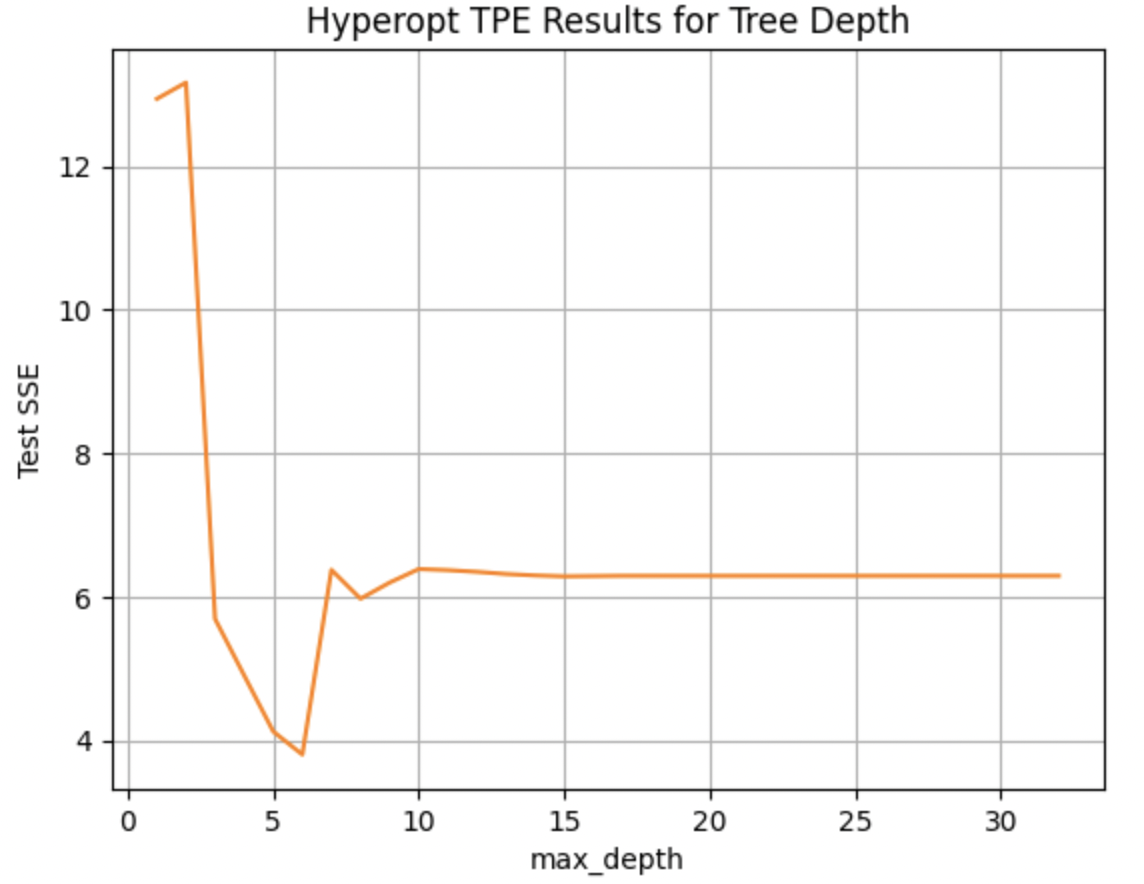}
        \caption{Tree-structured Parzen Estimator (TPE) search: noisy test dataset loss}
        \label{fig:tpe-high-noise}
    \end{subfigure}
    \caption{
        Loss vs.\ Complexity for a decision tree regressor on 
        $f(x)=\sin(4\pi x) + \varepsilon$ with Gaussian noise $\varepsilon \sim N(0, \sigma^2)$. Here $\sigma=0.3$, a high noise level case.
    }
    \label{fig:tree-high-noise}
\end{figure}

We also performed some bootstrapping experiments in order to handle the stochastic nature of tree regressors. However, the picture for the $0.95$ confidence interval consistently shows the optimal tree depth ``elbow'', as depicted in Figures~\ref{fig:tree-bootstrap-low-noise} and \ref{fig:tree-bootstrap-high-noise}. 

\begin{figure}[!htbp]
    \centering
    \begin{subfigure}[t]{0.45\linewidth}
        \centering
        \includegraphics[width=\linewidth]{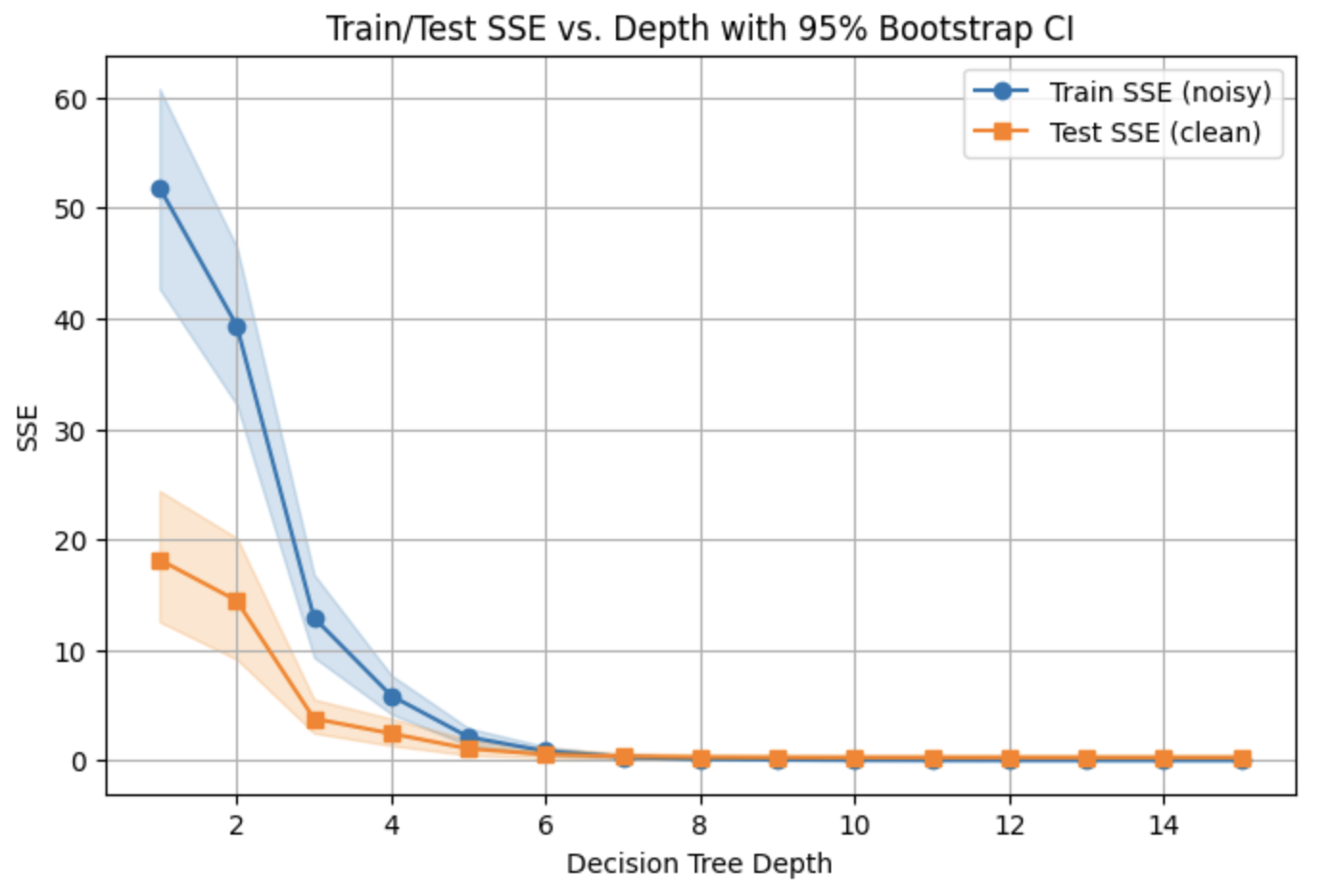}
        \caption{Loss vs. Complexity graph with $0.95$ confidence band}
        \label{fig:tpe-low-noise-graph}
    \end{subfigure}
    \hfill
    \begin{subfigure}[t]{0.45\linewidth}
        \centering
        \includegraphics[width=\linewidth]{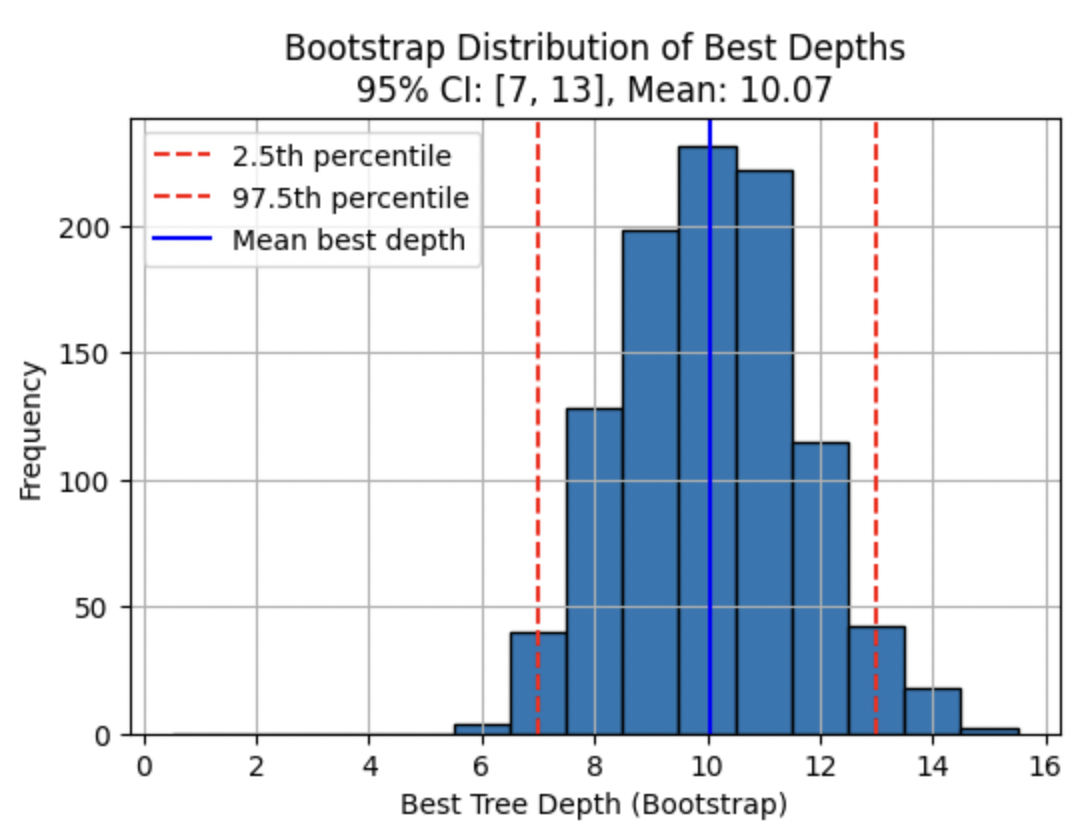}
        \caption{Histogram of optimal tree regressor depths}
        \label{fig:tpe-low-noise-ci}
    \end{subfigure}
    \caption{
        Loss vs.\ Complexity for a decision tree regressor on 
        $f(x)=\sin(4\pi x) + \varepsilon$ with Gaussian noise $\varepsilon \sim N(0, \sigma^2)$ with $\sigma=0.05$. The TPE estimator is boostrapped on $N=1000$ trials to produce $0.95$ confidence intervals. 
    }
    \label{fig:tree-bootstrap-low-noise}
\end{figure}

\begin{figure}[!htbp]
    \centering
    \begin{subfigure}[t]{0.45\linewidth}
        \centering
        \includegraphics[width=\linewidth]{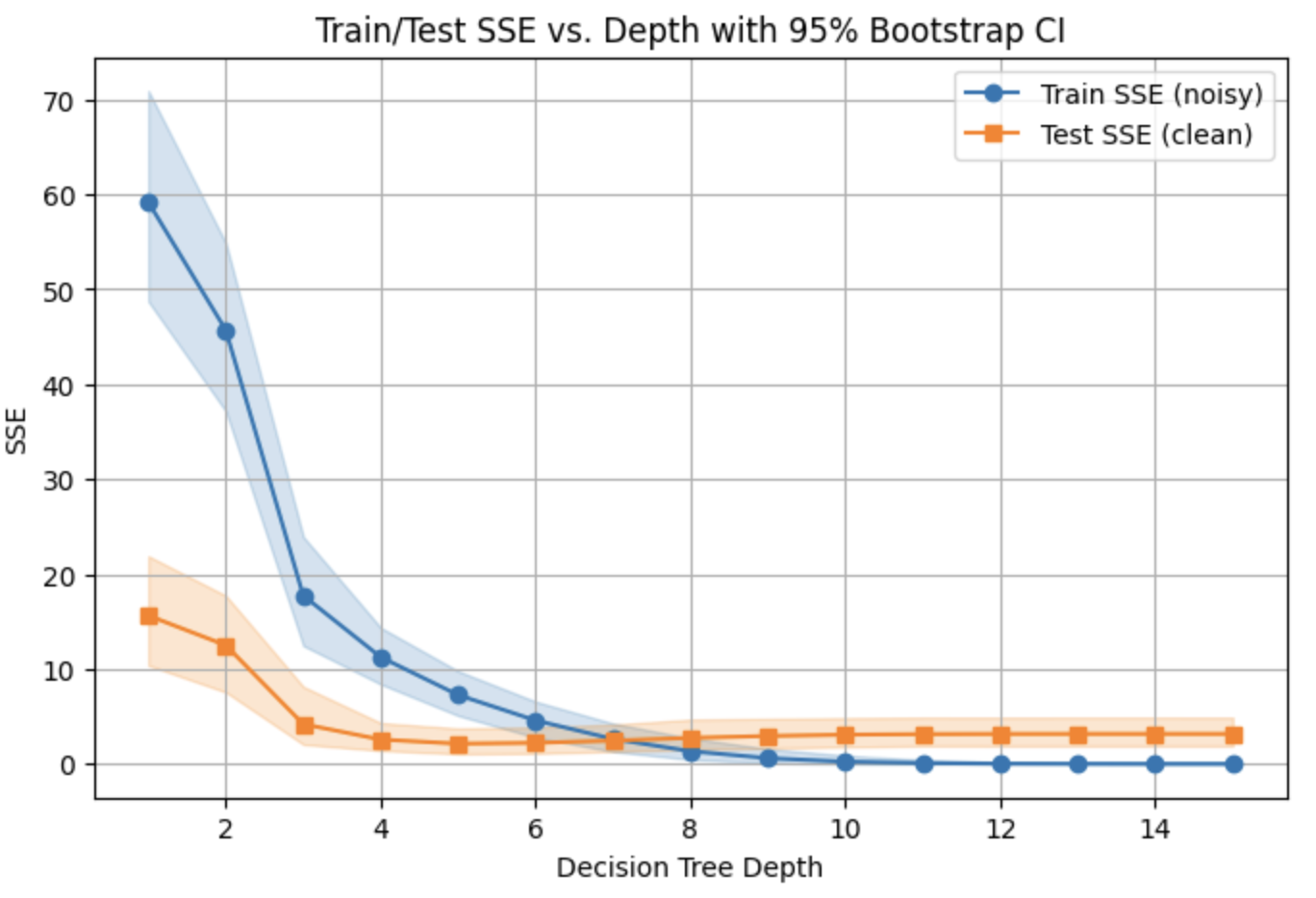}
        \caption{Loss vs. Complexity graph with $0.95$ confidence band}
        \label{fig:tpe-high-noise-graph}
    \end{subfigure}
    \hfill
    \begin{subfigure}[t]{0.45\linewidth}
        \centering
        \includegraphics[width=\linewidth]{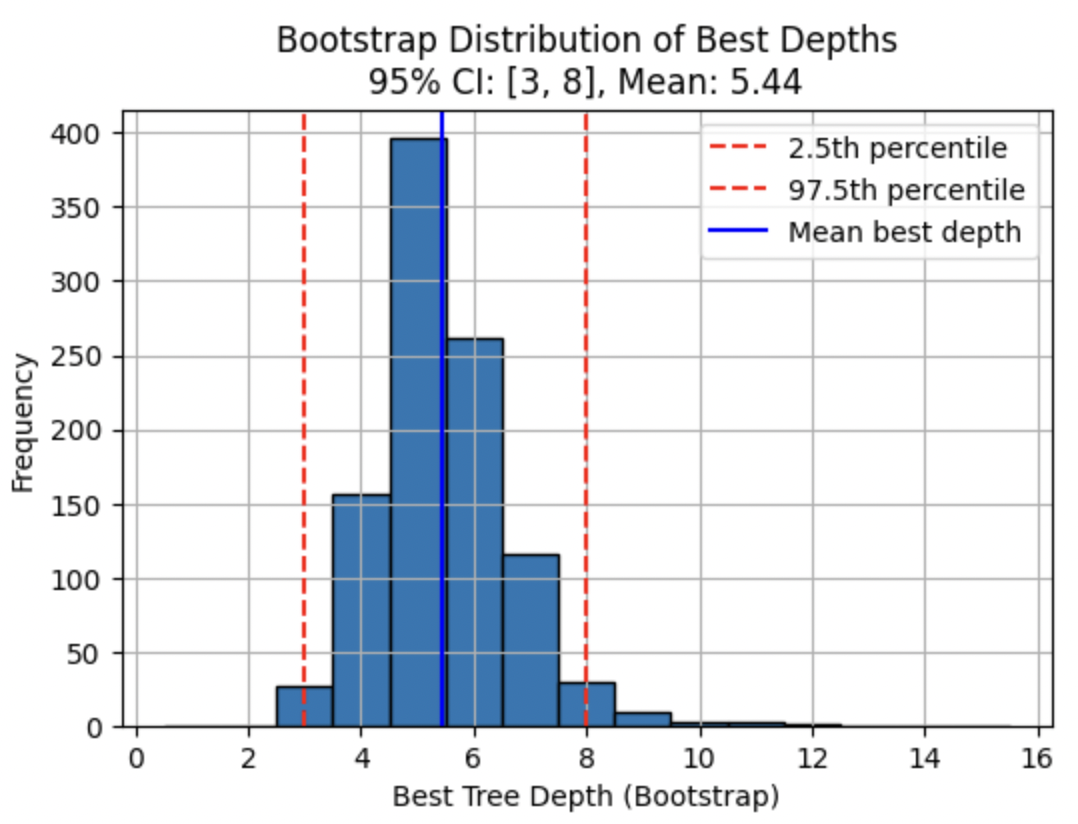}
        \caption{Histogram of optimal tree regressor depths}
        \label{fig:tpe-high-noise-ci}
    \end{subfigure}
    \caption{
        Loss vs.\ Complexity for a decision tree regressor on 
        $f(x)=\sin(4\pi x) + \varepsilon$ with Gaussian noise $\varepsilon \sim N(0, \sigma^2)$ with $\sigma=0.3$. The TPE estimator is boostrapped on $N=1000$ trials to produce $0.95$ confidence intervals. 
    }
    \label{fig:tree-bootstrap-high-noise}
\end{figure}

Also we observe the following unsurprising phenomenon: the stronger the noise, the more predictions tend to cluster as compared against the clean test dataset (see Figures~\ref{fig:preds-low-high-4pane}). The optimal depth of the tree regressor goes \emph{down} in the strong noise case, as it should be, to avoid overfitting. In the weak noise case, the optimal tree depth goes \emph{up} so that the regressor  can learn the dataset with more granularity. 

\begin{figure}[!htbp]
  \centering
  \begin{subfigure}[t]{0.48\textwidth}
    \centering
    \includegraphics[width=\linewidth]{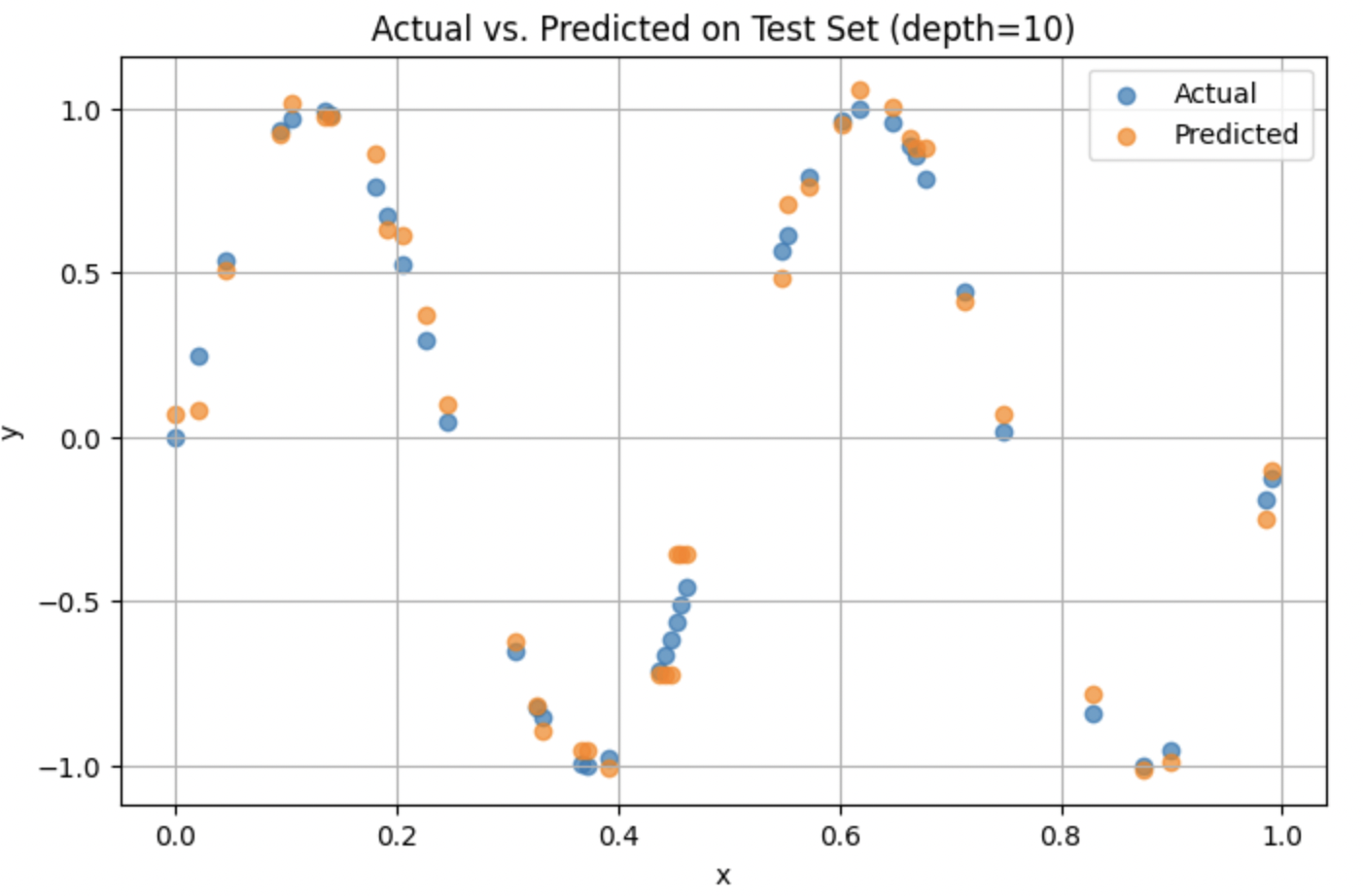}
    \caption{$\sigma=0.05$, SSE loss}
    \label{fig:tree-l2-low-noise}
  \end{subfigure}
  \hfill
  \begin{subfigure}[t]{0.48\textwidth}
    \centering
    \includegraphics[width=\linewidth]{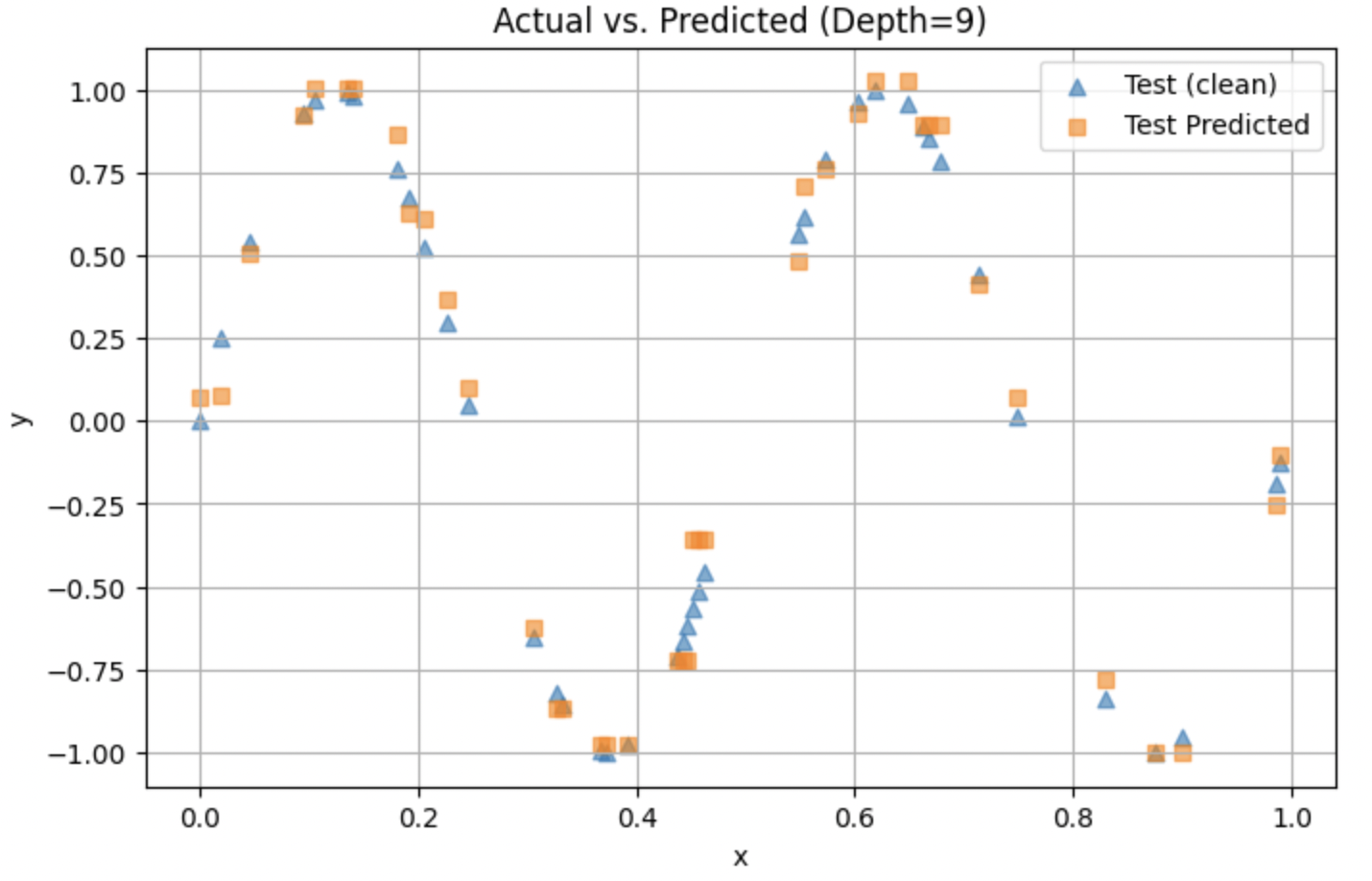}
    \caption{$\sigma=0.05$, SAE loss}
    \label{fig:tree-l1-low-noise}
  \end{subfigure}

  \vskip\baselineskip

  \begin{subfigure}[t]{0.48\textwidth}
    \centering
    \includegraphics[width=\linewidth]{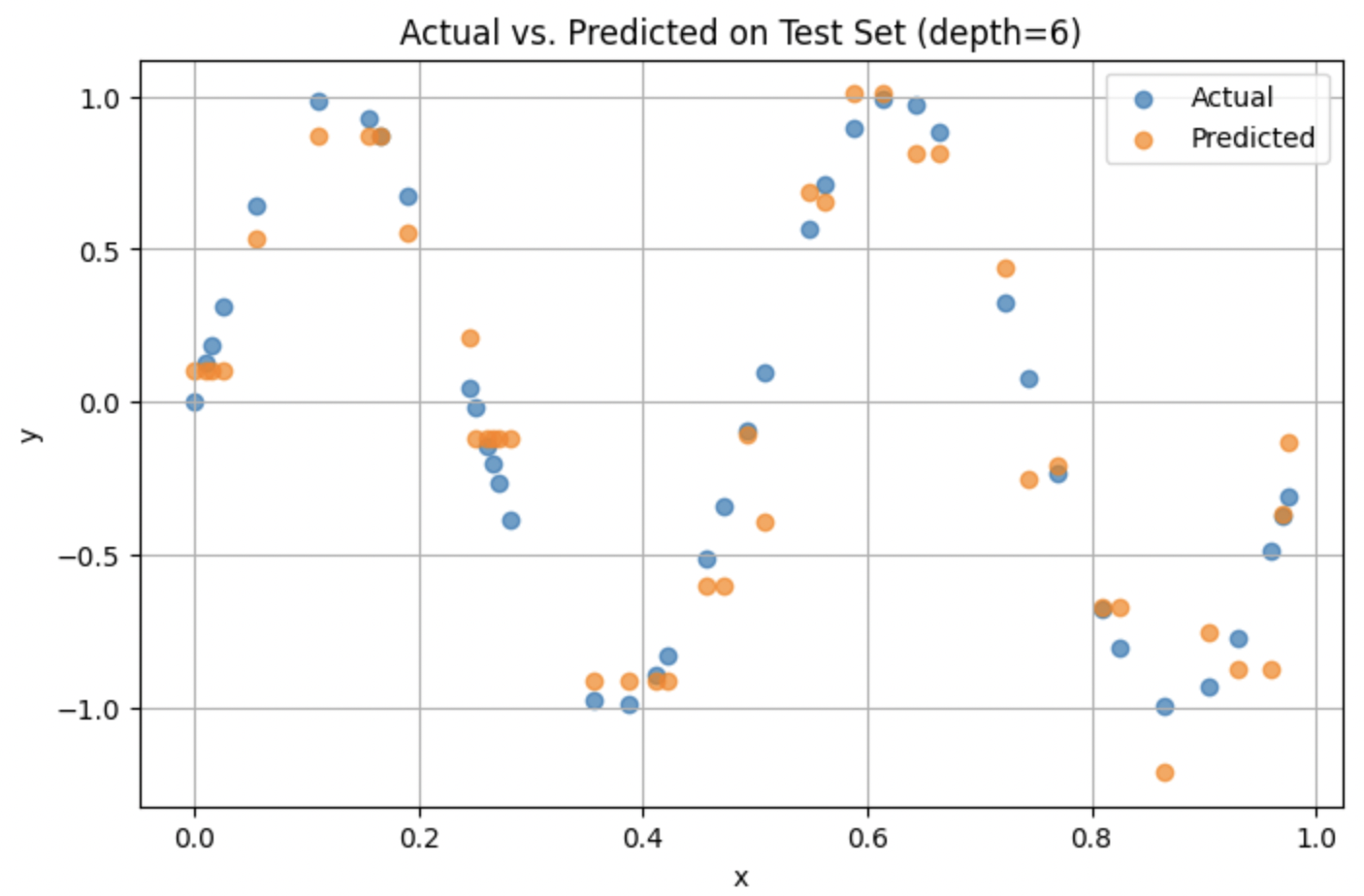}
    \caption{$\sigma=0.3$, SSE loss}
    \label{fig:tree-l2-high-noise}
  \end{subfigure}
  \hfill
  \begin{subfigure}[t]{0.48\textwidth}
    \centering
    \includegraphics[width=\linewidth]{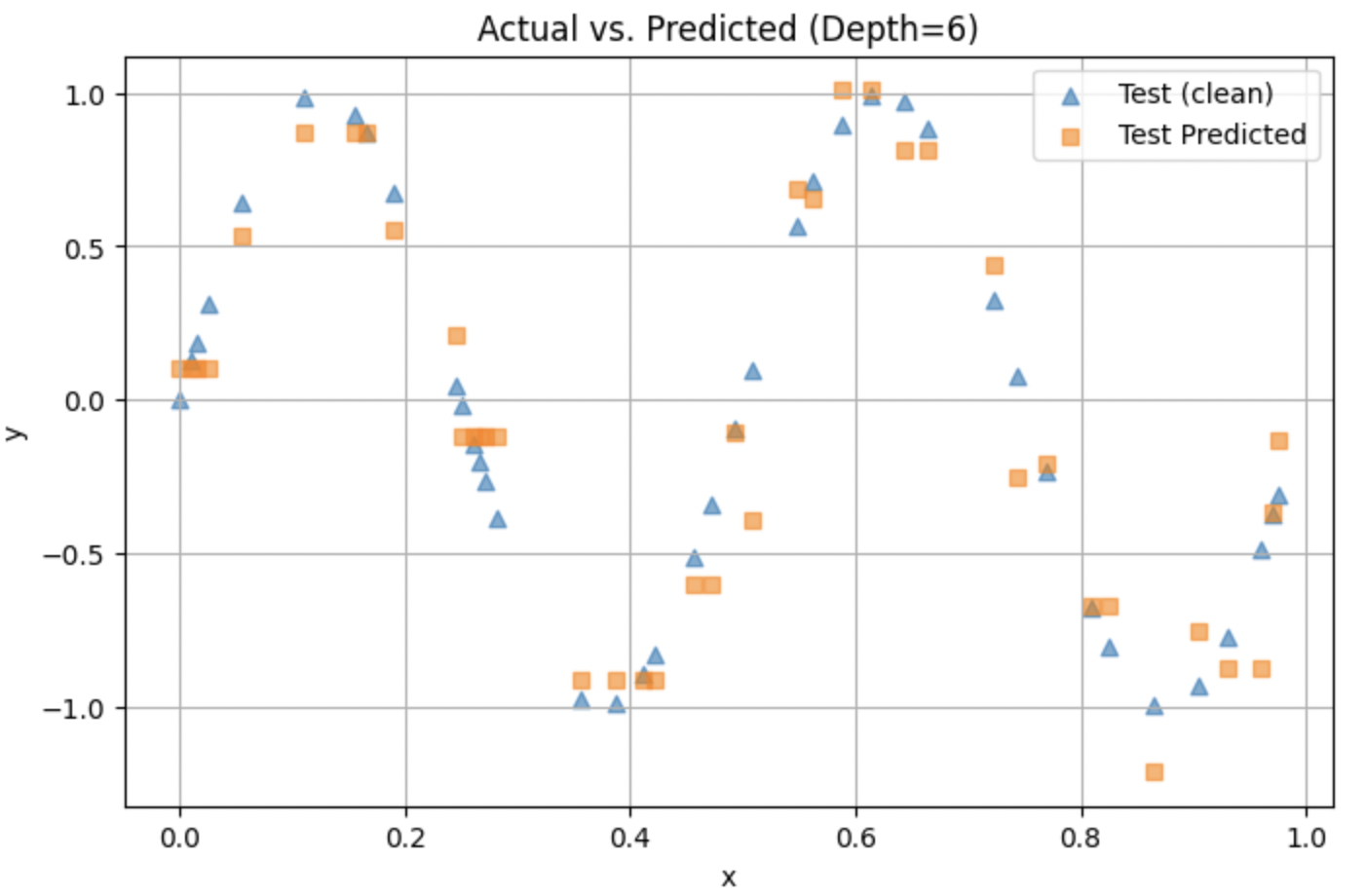}
    \caption{$\sigma=0.3$, SAE loss}
    \label{fig:tree-l1-high-noise}
  \end{subfigure}

  \caption{Predictions vs. clean test data for a decision tree regressor on $f(x)=\sin(4\pi x) + \varepsilon$ with Gaussian noise $\varepsilon \sim N(0, \sigma^2)$. The train / test loss is either SSE ($L_2$) or SAE ($L_1$). 
  }
  \label{fig:preds-low-high-4pane}
\end{figure}

Here we use the standard \textrm{DecisionTreeRegressor} class from  \textrm{sklearn}. The computation can be fully reproduced in the Google Colab environment \cite{github-structure}. 

\subsection{Deep neural networks}

A more complicated example of a neural network based on a Directed Acyclic Graph (DAG) is provided in \cite{github-structure}. This DAG represents a network with fully connected layers followed by ReLU nonlinearities. The loss function used is MSE, while the network complexity is composite: it accounts for both the topology and training hyperparameters such as the learning rate and number of epochs. Namely, for a given DAG $D$, learning rate $\lambda$ and number of epochs $N$, we have
\[
\text{Comp}(D) = E(D) \cdot (1 + \text{AvgClustering}(D))\cdot \text{ASP}(D),
\]
where $E(D)$ is the number of edges of $D$, $\text{AvgClustering}$ is the average clustering coefficient, and $\text{ASP}$ is the average shortest path between any pair of vertices connected by directed edges of $D$. 

Let $M = (D, \lambda, N)$ be the model based on $D$ with learning rate $\lambda$ and the number of training epochs $N$. Then the model \emph{composite complexity} equals
\[
\text{Comp}(M) = \text{Comp}(D) + \frac{1}{\lambda} + N. 
\]

In Figures~\ref{fig:deep-network-low-noise} and \ref{fig:deep-network-high-noise} we picture the Pareto frontier of \texttt{HyperOpt} search where the best fit (lowest MSE) model is marked, as well as the most salient ``elbow'' point (maximum distance from the line joining Pareto frontier's endpoints). 

A few fits other than the best fit are shown: a low complexity model, a high complexity model, and the most salient ``elbow'' point model. We can see that both low and high complexity are lacking goodness-of-fit (in the high complexity case, possibly because the learning rate is too small). The elbow point provide a fit that already resembles the best one, as the phase transition happens after which the model gains in complexity to improve the fit further while keeping the same qualitative behavior. 

Here we remark that the goodness-of-fit displayed by the best models in Figure~\ref{fig:deep-network-low-noise} (weak noise, $\sigma=0.05$) and Figure~\ref{fig:deep-network-high-noise} (strong noise, $\sigma=0.3$) are comparable while in the presence of strong noise the model complexity required raises by a factor of $\approx 2$ in our numerical experiments. The reader is welcome to reproduce them by running the Jupyter notebook available from \cite{github-structure} on Google Colab.  

\begin{figure}[!htbp]
    \centering
    \begin{subfigure}[t]{0.45\linewidth}
        \centering
        \includegraphics[width=\linewidth]{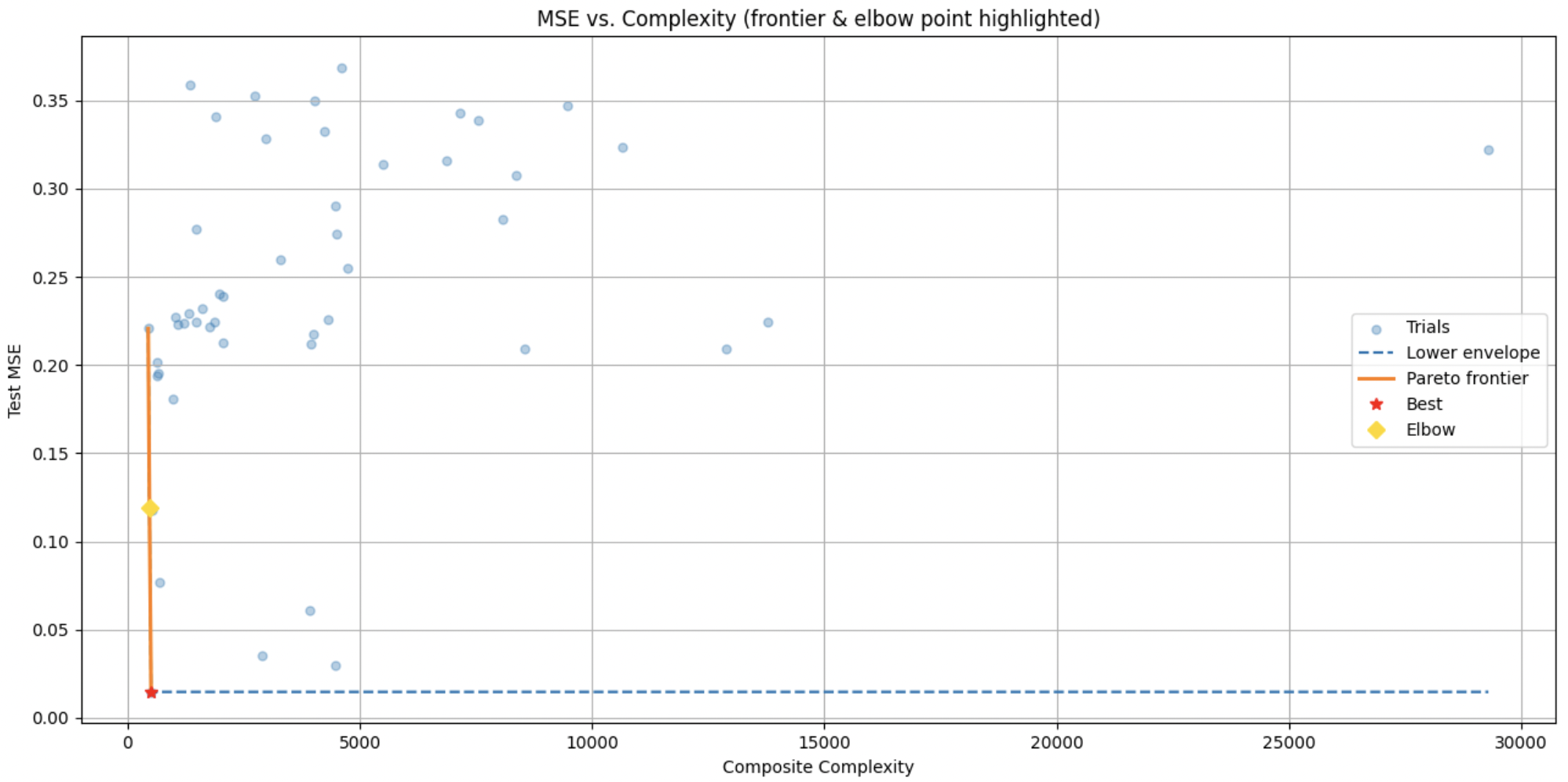}
        \caption{Pareto frontier and lower envelope}
        \label{fig:tpe-deep-low-noise-graph}
    \end{subfigure}
    \hfill
    \begin{subfigure}[t]{0.45\linewidth}
        \centering
        \includegraphics[width=\linewidth]{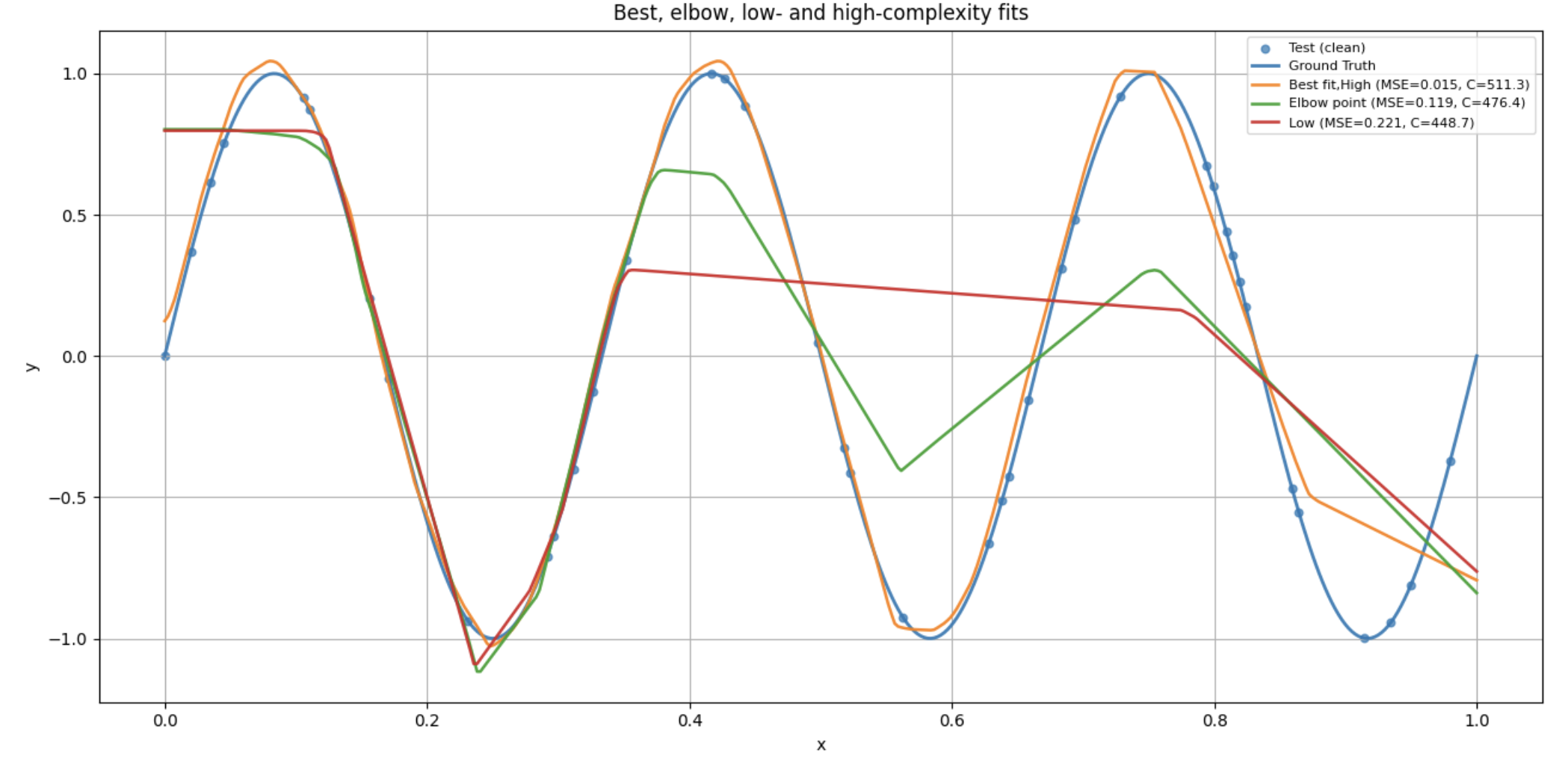}
        \caption{Model fit at various complexities}
        \label{fig:tpe-deep-low-noise-fit}
    \end{subfigure}
    \caption{
        Loss vs.\ Complexity for a deep network regressor on 
        $f(x)=\sin(6\pi x) + \varepsilon$ with Gaussian noise $\varepsilon \sim N(0, \sigma^2)$ with $\sigma=0.05$. Since training a ``deep'' network results in higher test SSE variability, we also show the Pareto boundary and lower convex envelope to illustrate the elbow point. 
    }
    \label{fig:deep-network-low-noise}
\end{figure}

\begin{figure}[!htbp]
    \centering
    \begin{subfigure}[t]{0.45\linewidth}
        \centering
        \includegraphics[width=\linewidth]{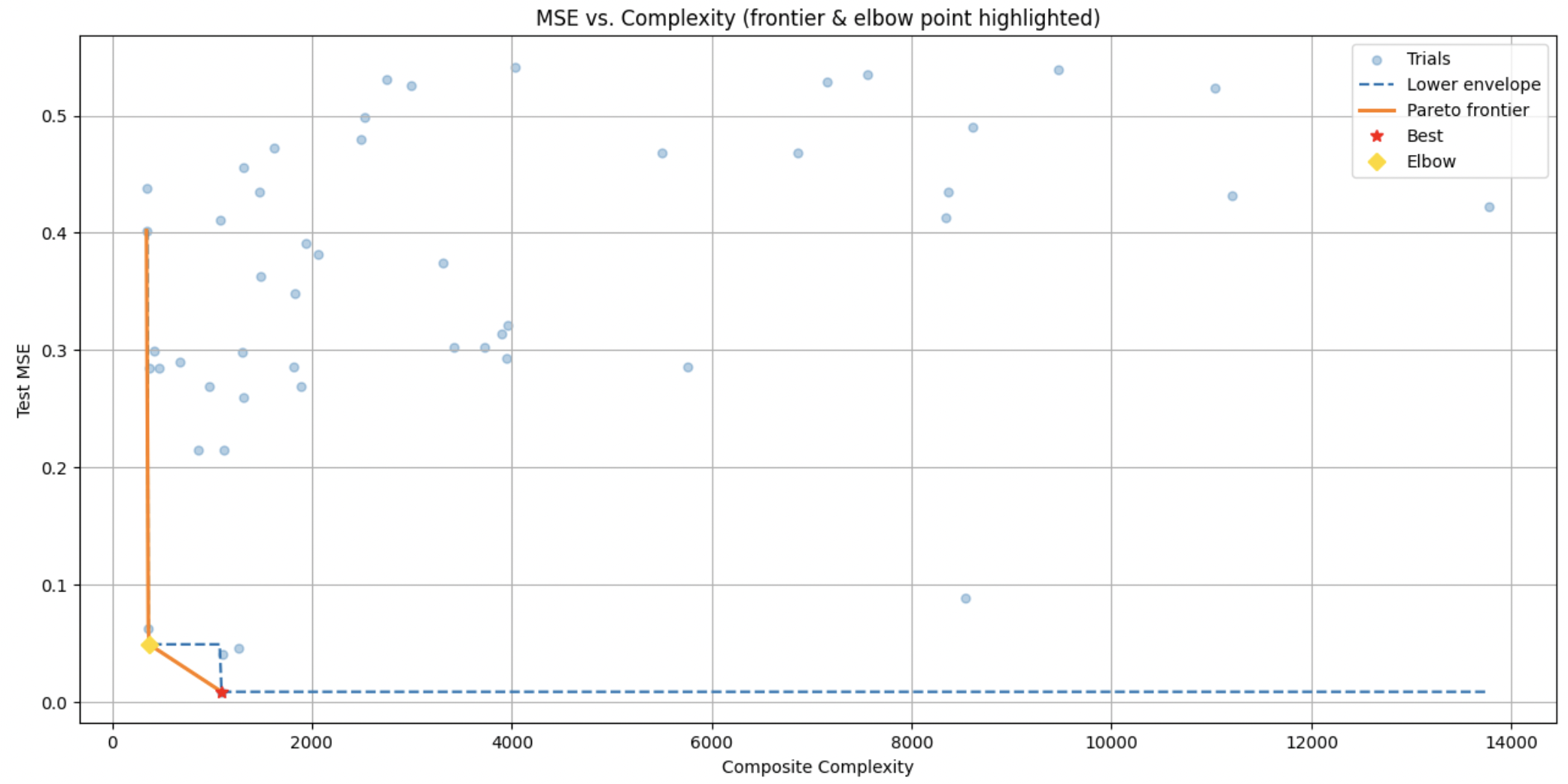}
        \caption{Loss vs. Complexity Pareto frontier}
        \label{fig:tpe-deep-high-noise-graph}
    \end{subfigure}
    \hfill
    \begin{subfigure}[t]{0.45\linewidth}
        \centering
        \includegraphics[width=\linewidth]{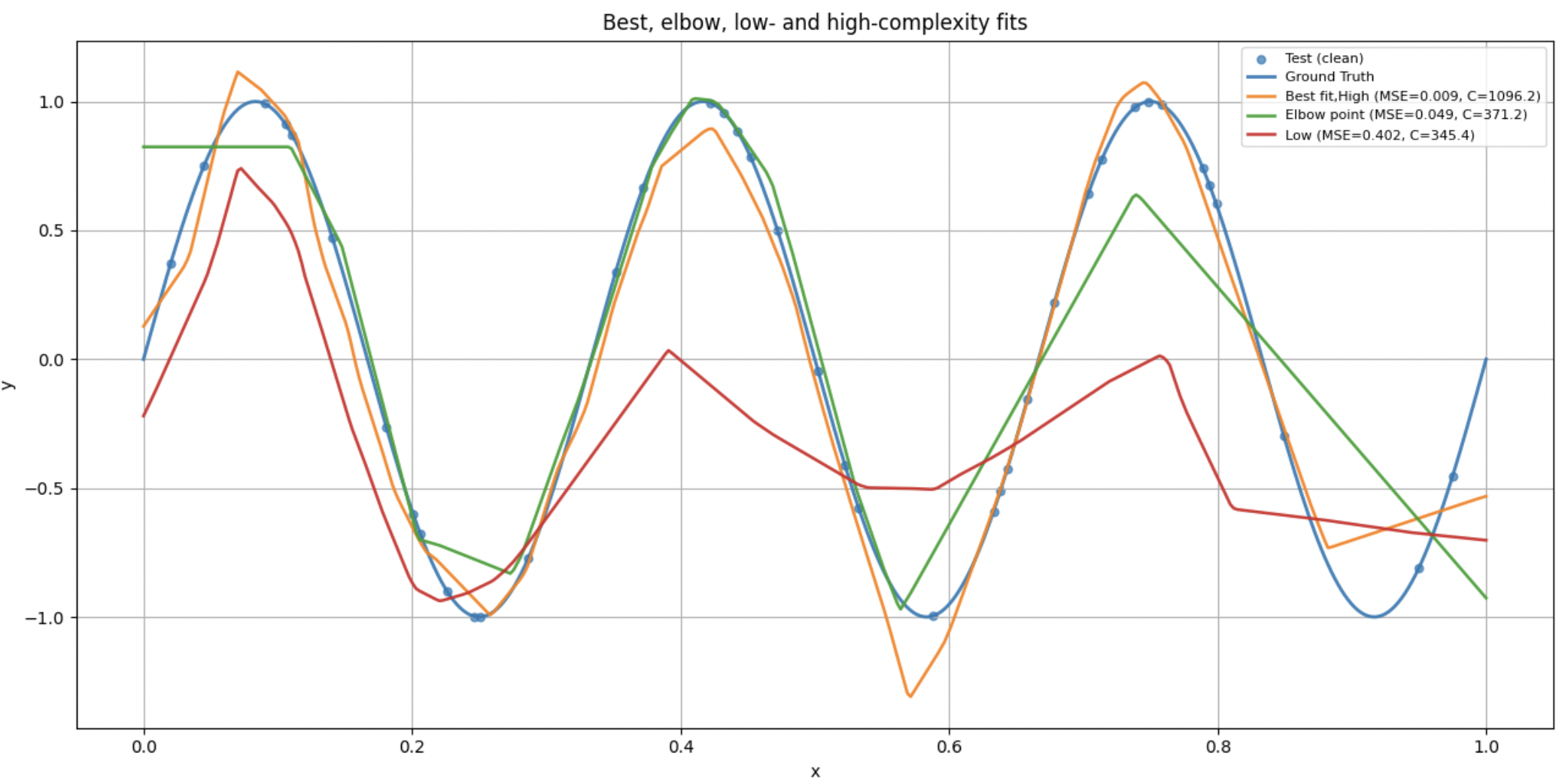}
        \caption{Model fit at various complexities}
        \label{fig:tpe-deep-high-noise-fit}
    \end{subfigure}
    \caption{
        Loss vs.\ Complexity for a deep network regressor on 
        $f(x)=\sin(6\pi x) + \varepsilon$ with Gaussian noise $\varepsilon \sim N(0, \sigma^2)$ with $\sigma=0.3$. The Pareto boundary and lower convex envelope of training SSE points are also shown to illustrate the elbow point. 
    }
    \label{fig:deep-network-high-noise}
\end{figure}

\subsection{Using different Complexity Functionals}\label{subsec:comp-practice}

As discussed in Section~\ref{subsec:comp-design}, $\mathrm{Comp}(S)$ is not a ``universal'' quantity but an operational choice: it should measure whichever resource or inductive bias we actually care about (e.g.\ interpretability, latency, or description length). Consequently the ``elbow'' seen in the dual free energy
\[
F(\lambda)=\min_S\bigl[\mathrm{Loss}(S)+\lambda\,\mathrm{Comp}(S)\bigr]
\]
and the associated optimizer path $\alpha(\lambda)$ are \emph{proxy-dependent} objects.

Our tree-regression experiment \cite{github-structure} makes this dependence concrete by comparing three proxies on the same hypothesis class: (i) classical tree depth, (ii) a compression proxy given by the \texttt{zlib}-compressed byte length of a serialized tree, and (iii) an ``effective partition'' proxy given by the leaf-mass entropy
\[
\mathrm{Comp}_{\mathrm{ent}}(S)\;=\;-\sum_{\ell\in\mathrm{leaves}} p_\ell\log p_\ell,
\qquad
p_\ell=\frac{n_\ell}{N_{\mathrm{train}}}.
\]
Because these proxies live on very different numerical scales (depth $\sim 10$, entropy $\sim 5$, \texttt{zlib} length $\sim 10^3$ bytes), we compare them using a normalized multiplier $\widetilde\lambda$ (one per proxy) so that the complexity term is commensurate with the loss term. With this normalization, cross-proxy comparisons remain meaningful and can be performed visually, see Figures \ref{fig:sse-structure-depth}, \ref{fig:sse-structure-zlib} and \ref{fig:sse-structure-entropy}. 

\begin{figure}[!htbp]
    \centering
    \includegraphics[width=\linewidth]{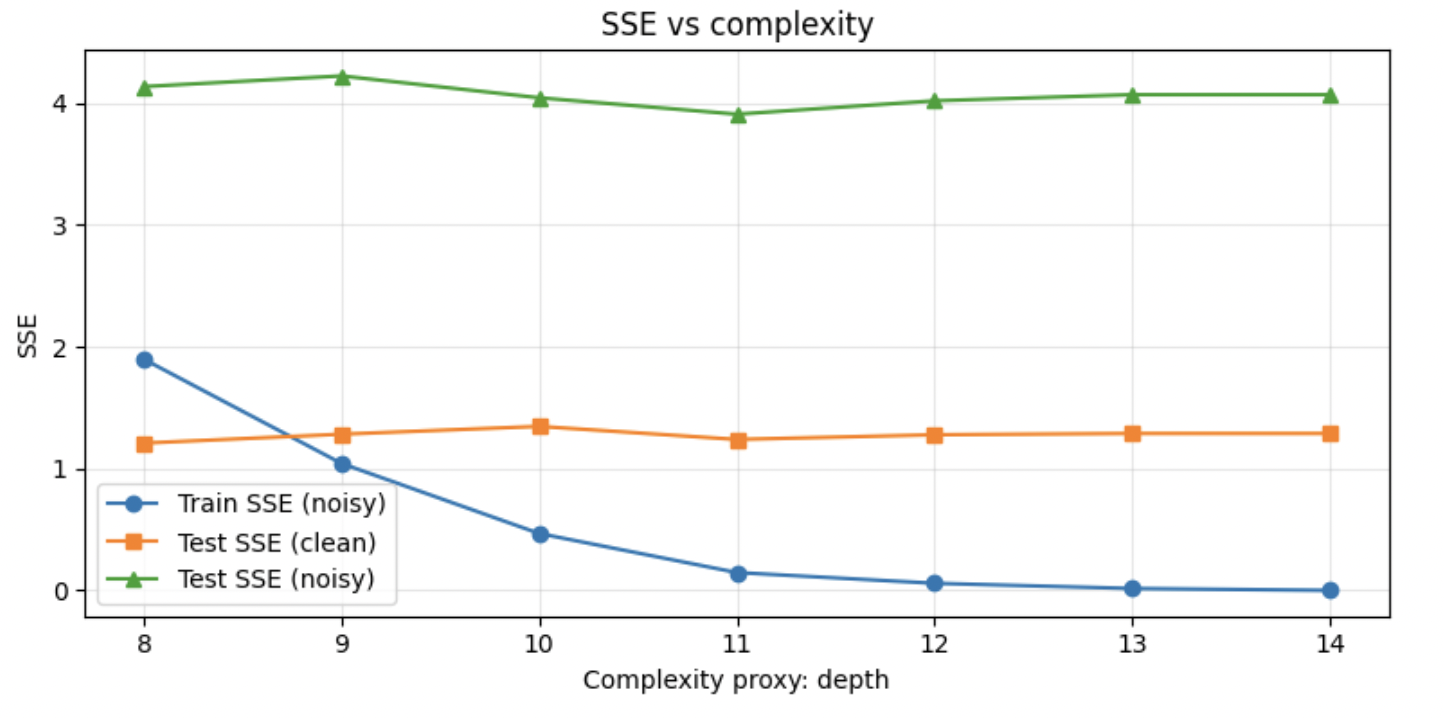}
    \caption{Loss--Complexity landscape: SSE vs $\mathrm{Comp}(S) = \text{tree-depth of } S$.}
    \label{fig:sse-structure-depth}
\end{figure}

\begin{figure}[!htbp]
    \centering
    \includegraphics[width=\linewidth]{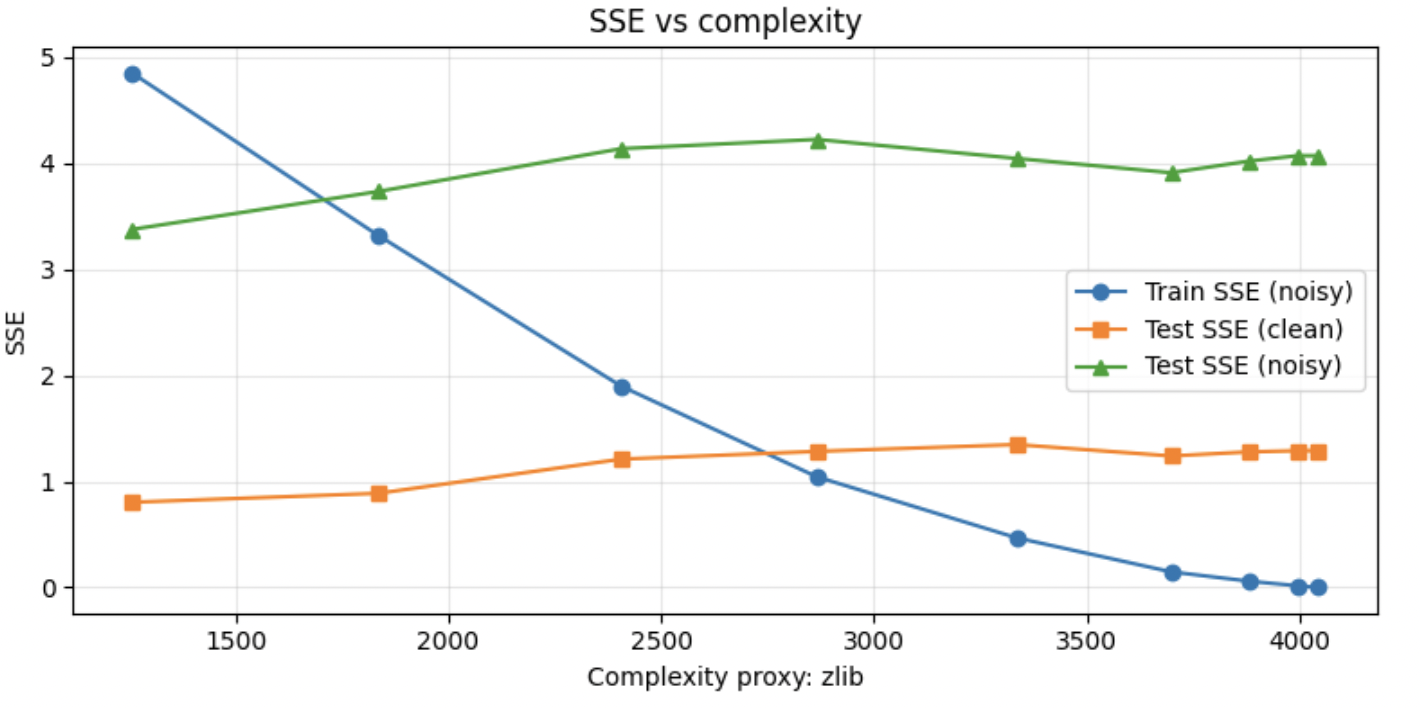}
    \caption{Loss--Complexity landscape: SSE vs $\mathrm{Comp}(S) = \text{byte-length of zlib compressed } S$.}
    \label{fig:sse-structure-zlib}
\end{figure}

\begin{figure}[!htbp]
    \centering
    \includegraphics[width=\linewidth]{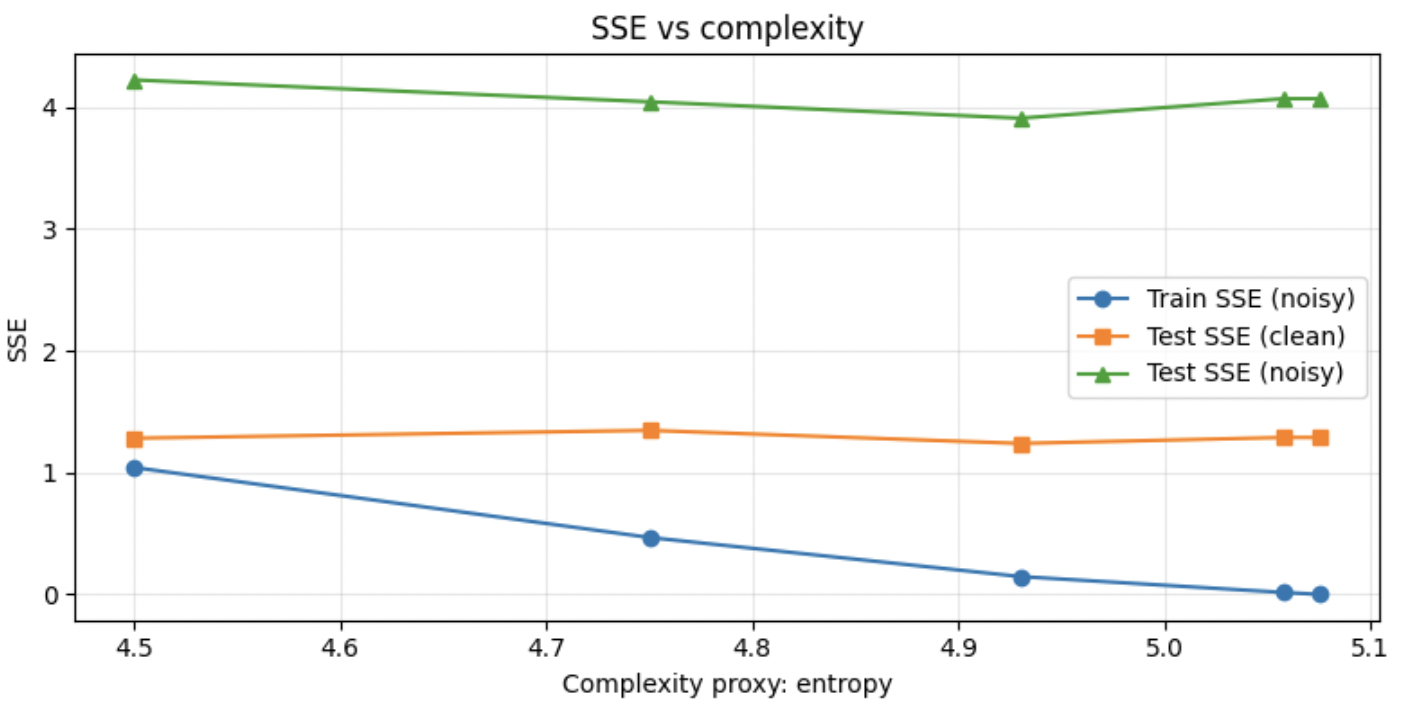}
    \caption{Loss--Complexity landscape: SSE vs $\mathrm{Comp}(S) = \text{leaf-entropy of } S$.}
    \label{fig:sse-structure-entropy}
\end{figure}

With this normalization, the $\widetilde\lambda=0$ sanity check behaves exactly as the theory predicts: since the action reduces to $\mathrm{Loss}(S)$, all proxies select the same (pure training-loss) minimizer, and indeed the depth, \texttt{zlib}, and entropy runs return the identical tree at $\tilde\lambda=0$. 

For $\widetilde\lambda>0$, however, the optimizer paths $\alpha(\widetilde\lambda)$ diverge: the three proxies produce visibly different loss--complexity landscapes and, more importantly, select different best-generalizing trees. In the reported run, the model minimizing clean test SSE under the depth proxy occurs at depth $8$ (test SSE$_{\mathrm{clean}}\approx 1.21$, see Figure~\ref{fig:sse-structure-depth}), the \texttt{zlib} proxy favors a much shallower depth-$6$ tree (test SSE$_{\mathrm{clean}}\approx 0.80$, see Figure~\ref{fig:sse-structure-zlib}), while the entropy proxy selects a deeper depth-$11$ tree (test SSE$_{\mathrm{clean}}\approx 1.24$, see Figure~\ref{fig:sse-structure-entropy}); the corresponding tree signatures are pairwise distinct \cite{github-structure}, confirming that these are genuinely different minimizers rather than reparametrizations.

These discrepancies reflect real differences in what each proxy penalizes. Depth is a coarse capacity control tied to worst-case decision path length (interpretability and latency). The \texttt{zlib} proxy is closer to an MDL-style description-length penalty and in this experiment pushes strongly toward simpler trees (here, favoring a shallow tree that underfits the noisy training data but improves clean-test SSE). The entropy proxy is smoother but can be numerically narrow (here spanning only a small range), so in finite regimes it may have limited granularity and therefore induces a different trade-off curve.

The key point is that the elbow point and, correspondingly, any susceptibility peak $\chi(\lambda)=\mathrm{Var}_{\pi_{\lambda}}[\mathrm{Comp}(S)]$ is not a universal invariant of the task alone: it is a statement about \emph{loss vs.\ the particular operational complexity} that the user chose to measure. Our framework is developed for exactly this reality: one specifies constraints, chooses $\mathrm{Comp}(S)$ to faithfully represent them, and then reads off the salient transitions from the resulting loss--complexity landscape.

\section{Conclusion}
We have constructed a complete theoretical and computational framework to provide a practical and computable equivalent of the Kolmogorov structure function, and established a new information–scattering analogy that predicts resonance phenomena. This leads to an efficient methods useful in model selection that requires only already existing Bayesian optimizers such as HyperOpt \cite{bergstra2013making} or Optuna \cite{akiba2019optuna} to run the necessary analysis. Our method finds the optimal goodness-of-fit vs model complexity tradeoff after which overfitting occurs. Experimental results validate our theoretical claims.

\section*{Acknowledgments} 
This material is based upon work supported by the Google~Cloud Research Award number GCP19980904. This work was also supported by the Wolfram Institute for Computational Foundations of Science, and by the John Templeton Foundation. 

\printbibliography

\end{document}